\newtheorem{remark}{Remark}
    \newtheorem{theorem}{Theorem}
    \newtheorem{lemma}[theorem]{Lemma}
    	\definecolor{darkgreen}{rgb}{0.01, 0.93, 0.29}
\definecolor{lightbrown}{rgb}{0.91, 0.4, 0.11}
\title{Positive Planar Satisfiability Problems \\under   3-Connectivity Constraints }
\author[1,3]{Md.\ Manzurul Hasan}
\author[2]{Debajyoti Mondal} 
\author[1]{Md.\ Saidur Rahman}
\affil[1]{Graph Drawing \& Information Visualization Laboratory, Department of Computer Science and Engineering,  
Bangladesh University of Engineering and Technology (BUET), Bangladesh }  
\affil[2]{Department of Computer Science, University of Saskatchewan, Saskatoon, Canada }  
\affil[3]{Department of Computer Science, 
American International University-Bangladesh (AIUB), Bangladesh 

\texttt{mhasan.cse00@gmail.com,  dmondal@cs.usask.ca,saidurrahman@cse.buet.ac.bd}}
\begin{document}
\maketitle              
\begin{abstract}
A 3-SAT problem is called positive and planar if all the literals are positive and the clause-variable  incidence graph (i.e., SAT graph) is planar. The NAE 3-SAT and 1-in-3-SAT are two variants of 3-SAT that remain NP-complete even when they are positive. The positive  1-in-3-SAT problem remains NP-complete under planarity constraint, but planar NAE 3-SAT is solvable in $O(n^{1.5}\log n)$ time. In this paper we prove that  a positive planar NAE 3-SAT is always satisfiable when the underlying SAT graph is 3-connected, and a satisfiable assignment can be obtained in linear time. We also show that without 3-connectivity constraint, existence of  a linear-time algorithm for positive planar NAE 3-SAT problem is unlikely as it would imply a linear-time algorithm for finding a spanning 2-matching in a planar subcubic graph. We then prove that positive planar 1-in-3-SAT remains NP-complete under the 3-connectivity constraint, even when each variable appears in at most 4 clauses. However, we show that the 3-connected  planar 1-in-3-SAT is always satisfiable when each variable  appears in an even number of clauses.
\end{abstract}

\section{Introduction}
\label{sec:intro}
Boolean algebra is widely used in digital logic design to represent and simplify the Boolean operations. Possible values of the variables are true or $1$, and false or $0$. The negation or NOT operation is denoted
by $\neg{}$ or $\bar{}$. A \emph{SAT} is a Boolean formula consisting of conjunction of clauses, e.g., $\Phi = (x_1 \vee x_2 \vee \neg x_3) \wedge (x_2 \vee \neg x_5)$. An assignment for a Boolean formula is a mapping of values to its variables. With such a mapping the formula can be evaluated according to the respective rules. If the formula is satisfied, i.e. evaluates to $1$, then the assignment is called satisfying and otherwise, unsatisfying. Cook~\cite{Cook71} showed that the satisfiability problem for Boolean formulas, SAT, is NP-complete. From then on SAT has been reduced 
to many other NP problems to prove them as NP-complete. We refer the reader to~\cite{GJ79} for more details on NP-completeness.

A \emph{$3$-SAT problem} is a SAT problem where every clause contains  at most $3$-literals. A \emph{SAT graph}  $G(\Phi)$ of a $3$-SAT instance $\Phi$   consists of a vertex for each  clause and a vertex for each variable, where there exists an edge between a clause vertex  and a variable vertex if and only if the variable or its negation   appears in that clause (e.g., see Figure~\ref{figure:planar_3-sat}(a)). A $3$-SAT problem is called \emph{planar} if its  SAT graph is planar. Lichtenstein~\cite{Lic82} showed that the planar $3$-SAT problem is NP-complete.

\begin{figure}[!h]
\centering
\includegraphics[width=.75\textwidth]{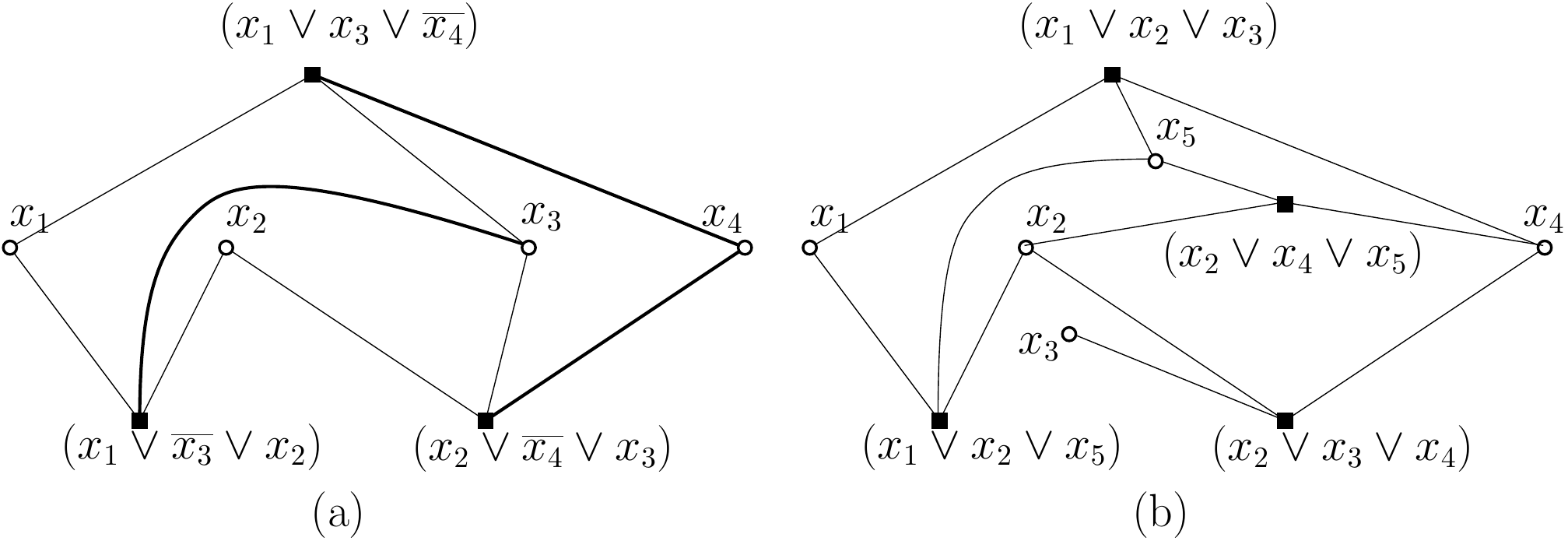}
\caption{(a) A SAT graph corresponding to the Boolean formula $\phi = (x_1 \vee x_3 \vee \overline{x_4}) \wedge (x_1 \vee x_2\vee\overline{x_3})
\wedge (x_2 \vee x_3\vee\overline{x_4}) $. The inverted variables are shown in bold edges. (b) A SAT graph corresponding to a positive planar  3-SAT. }\label{figure:planar_3-sat}
\end{figure}



A rich body of research investigates variants of 3-SAT problems~\cite{Tippenhauer16,Filho19}, and also under various restrictions on the SAT graph, e.g., when the SAT graph is planar, or 3-connected, or of bounded degree~\cite{DBLP:journals/dam/Kratochvil94}. Not-All-Equal (NAE) $3$-SAT and 1-in-3-SAT are two well-studied variants for 3-SAT. 
 In a \emph{NAE 3-SAT problem}, the goal is to find a truth assignment to the variables such that each clause contains at least one  true Boolean value and at least one false Boolean value.  In a \emph{1-in-3-SAT problem}, the goal is to find a truth assignment to the variables such that each clause contains exactly one  true Boolean value.

Both NAE 3-SAT and 1-in-3-SAT remains NP-complete even when restricted to  \emph{positive SAT}, i.e., when all the literals are positive (e.g., see Figure~\ref{figure:planar_3-sat}(b)). Planar NAE 3-SAT is known to be polynomial-time solvable. Moret~\cite{Moret88} showed that planar NAE $3$-SAT 
is in $P$, but no tight worst case time-complexity was calculated in the paper~\cite{Moret88}. Moret's idea was based on finding a min-cut in a planar graph, and thus the planar NAE 3-SAT can be solved in $O(n^{1.5}\log n)$ time~\cite{DBLP:journals/tc/ShihWK90}. The positive 1-in-3-SAT problem remains NP-complete even under  stringent conditions, i.e., when the SAT graph is planar and cubic~\cite{DBLP:journals/dcg/MooreR01}. A \emph{cubic} (resp., \emph{subcubic}) graph is a graph where the degree of each vertex is exactly  (resp., at most) three.  


A natural question in this context is to ask whether there are nontrivial variants of the positive planar NAE 3-SAT or positive planar 1-in-3-SAT that can be solved faster. In this paper we consider the 3-connectivity constraints on the SAT graph. A rich body of research examines NP-complete graph problems under various connectivity constraints~\cite{DBLP:conf/icalp/Biedl14,DBLP:conf/walcom/DurocherM12,DBLP:journals/siamcomp/GareyJT76}. The planar 3-SAT problem remains NP-hard even when the SAT graph is 3-connected and each variable appears in at most 4 clauses~\cite{DBLP:journals/dam/Kratochvil94}. However, 3-SAT is trivially satisfiable when the SAT graph is 3-connected and cubic (i.e., with only degree 3 vertices).



\paragraph{Contributions:} In this paper we examine   positive planar satisfiability problems under 3-connectivity constraints. Our contributions are as follows.

\begin{enumerate}
    \item We prove that positive planar NAE $3$-SAT  is always satisfiable when the SAT graph is  $3$-connected, and a satisfiable assignment can be obtained in linear time. \item We show that without 3-connectivity constraint, the positive planar NAE 3-SAT problem is as hard as finding a spanning 2-matching in a planar cubic graph (i.e., a planar graph with only degree 3 vertices). A \emph{spanning 2-matching} of a graph is a spanning subgraph with maximum degree 2. Since no linear-time algorithm is  known for finding a spanning 2-matching  in a planar cubic graph, finding a linear time algorithm  for NAE 3-SAT appears to be challenging.
    \item We prove that positive planar 1-in-$3$-SAT  remains NP-complete even under   $3$-connectivity constraint and when every variable appears in  at most 4 clauses.

    \item In contrast, we show that positive planar 1-in-$3$-SAT is always satisfiable when every variable appears in  an even number of clauses, and a satisfiable assignment can be obtained in quadratic time.
\end{enumerate}

The rest of the paper is organized as follows. Section~\ref{sec:ppc} shows that positive planar NAE 3-SAT is always satisfiable and provides a linear-time algorithm to compute such  a satisfiable assignment.  Section~\ref{sec:lb} reduces spanning 2-matching to positive planar NAE 3-SAT. Section~\ref{sec:hard} proves the NP-hardness of positive planar 1-in-3-SAT   even when the SAT graph is 3-connected and  every variable appears in at most 4 clauses. Section~\ref{sec:even} proves that positive planar 1-in-$3$-SAT is always satisfiable when every variable appears in  an even number of clauses. Finally, Section~\ref{sec:con} concludes the paper suggesting directions for future research.

\section{Preliminaries}
In this section we give some definitions that will be used throughout the paper and present some preliminary results.

A plane graph is a planar graph with a fixed planar embedding in the plane. A planar graph may have an exponential number of embeddings. A plane graph splits the plane into connected regions called \emph{faces}. The unbounded region is called the \emph{outer face} and the other regions are called \emph{inner faces}. The vertices that lie on the unbounded face are called \emph{outer vertices} and the remaining vertices are called \emph{inner vertices}.

A graph is bipartite if and only if it is bichromatic, i.e. the graph's vertices can be colored  with at most two colors such that no two adjacent vertices get the same color. The \emph{connectivity} $\kappa(G)$\index{$\kappa(G)$} of
a graph $G$ is the minimum number of vertices whose removal results in a disconnected graph or a single-vertex graph. By Menger's theorem, every pair of vertices $u,v$ in a  $k$-connected graph has at least $k$ vertex-disjoint paths (except for the common vertices $u,v$)  connecting $u$ and $v$. A plane graph is \emph{internally $k$-connected} if for every inner vertex $w$, there are $k$ vertex-disjoint paths (except for the common vertex $w$) that start  at $w$ and end  at an outer vertex.  We refer the reader to~\cite{Rahman17} for basic terminologies on graphs.

\begin{lemma}\label{3con}
Let $G$ be a 3-connected plane graph with a vertex $v$ of degree $d\ge 3$, where $w_1,\ldots, w_d$ are the neighbors of $v$.  Let $H$ be a 2-connected and internally 3-connected plane graph with at most $d$ outer vertices of degree two. Let $G'$ be a graph obtained by replacing $v$ with $H$ and connecting $w_1,\ldots, w_d$ to at least three outer vertices of $H$ such that the graph remains planar and  every degree-two outer vertex of $H$ obtains a new edge. Then $G'$ is 3-connected.
\end{lemma}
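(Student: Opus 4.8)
The plan is to show that $G'$ has no separating set of size at most two by a case analysis on where the hypothetical separating set lies. Let $S$ be a set of at most two vertices of $G'$; I must show $G' - S$ is connected (and has more than one vertex). The key structural facts I would exploit are: (i) $G$ is 3-connected, so $G$ minus any two vertices is connected; (ii) $H$ is 2-connected, so $H$ minus any one vertex is connected, and the outer boundary of $H$ is a cycle; (iii) $H$ is internally 3-connected, so every inner vertex of $H$ reaches the outer cycle via three vertex-disjoint paths, hence survives the deletion of any two vertices while staying connected to some outer vertex of $H$; and (iv) each degree-two outer vertex of $H$ received a new edge to some $w_i$, so in $G'$ every outer vertex of $H$ has at least one neighbor among $w_1,\dots,w_d$ — in particular the outer cycle of $H$ together with the $w_i$'s and the attaching edges forms a connected ``rim'' that cannot be disconnected from the rest of $G-v$ by two vertices.

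First I would set up notation: write $C$ for the outer cycle of $H$, let $A \subseteq V(C)$ be the (at least three) outer vertices of $H$ to which the $w_i$ are attached, and note every vertex of $A$ has a neighbor outside $H$ while every degree-two vertex of $C$ also has such a neighbor. Then I would split into cases according to $|S \cap V(H)|$.

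\emph{Case $|S \cap V(H)| = 0$:} Then $S \subseteq V(G') \setminus V(H) \subseteq V(G)\setminus\{v\}$. Contracting $H$ back to $v$, connectivity of $G'-S$ follows from 3-connectivity of $G$: $G-(S\cup\{v\})$ is connected (it is $G$ minus at most three vertices, but actually I only need that $G-S$ is connected and $v$'s neighborhood is not entirely killed), every $w_i \notin S$ lies in that connected part, at least one $w_i$ survives and attaches to $H$, and $H$ itself is connected, so $G'-S$ is connected.

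\emph{Case $|S \cap V(H)| = 1$, say $S \cap V(H) = \{x\}$:} Then $H - x$ is connected (2-connectivity of $H$) and, since $|C|\ge 3$, $C-x$ is a nonempty path whose endpoints still have neighbors outside $H$ (either they are in $A$, or they are degree-two vertices of $C$ with a new edge, or $x$ was the unique neighbor removed — but $C-x$ being a path of length $\ge 1$ still meets at least one vertex with an external neighbor, because at most one external attachment point could coincide with $x$ while $|A|\ge 3$ guarantees a surviving one unless... here I must be slightly careful and I return to this). The single remaining vertex $y$ of $S$ outside $H$ removes at most one $w_i$, but there remain external attachment points, so $H-x$ stays connected to $G - (\{v\}\cup\{y\})$, which is connected by 3-connectivity of $G$. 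Hence $G'-S$ is connected.

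\emph{Case $|S \cap V(H)| = 2$, $S = \{x_1,x_2\}\subseteq V(H)$:} Then no $w_i$ is removed, so the entire graph $G-v$ is intact in $G'-S$, i.e. $w_1,\dots,w_d$ and everything outside $H$ form a connected set $K$. It remains to show every vertex of $H-\{x_1,x_2\}$ connects to $K$. For an \emph{inner} vertex $w$ of $H$, internal 3-connectivity gives three vertex-disjoint $w$-to-$C$ paths, at least one avoiding $x_1,x_2$, reaching some outer vertex $z$; $z$ has an external neighbor (it is in $A$ or a degree-two vertex with a new edge), so $w$ reaches $K$. For an \emph{outer} vertex $z \notin \{x_1,x_2\}$: if $z$ has an external neighbor in $K$ we are done; if not, then $z$ had all its external attachments only to $w_i$'s — impossible since those are not in $S$ — so actually every surviving outer vertex directly touches $K$. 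Thus $G'-S$ is connected.

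The main obstacle I anticipate is the bookkeeping in the mixed case $|S\cap V(H)|=1$ combined with the requirement $|A|\ge 3$: I must verify that deleting one outer vertex $x$ of $H$ together with one vertex $y$ outside $H$ cannot sever the ``rim.'' This uses crucially that $H$ is attached to $G$ via at least three outer vertices and that $C$ is a cycle (so $C-x$ is connected), together with the hypothesis that every degree-two outer vertex gets a fresh edge — without the latter, a degree-two outer vertex of $H$ adjacent only to $x$ inside $H$ and to nothing outside would be disconnected by $\{x\}$ alone. I would also double-check the degenerate possibility $d = 3$ with exactly three attachment points, where the slack is tightest, and confirm the claim that $G'$ has at least two vertices (immediate since $|V(H)| + |V(G)| - 1 \ge 2$).
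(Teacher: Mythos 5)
Your step (iv) is where the argument breaks. The hypotheses only guarantee attachments at (a) at least three outer vertices of $H$ and (b) every outer vertex whose degree in $H$ is two; an outer vertex of degree three or more in $H$ may receive no new edge at all, so your inference that ``every outer vertex of $H$ has at least one neighbor among $w_1,\ldots,w_d$'' is false in general. This claim carries the weight of your Case $|S\cap V(H)|=2$ twice: the outer endpoint $z$ of the surviving path from an inner vertex need not have an external neighbor, and for outer vertices your dichotomy (``either $z$ touches $K$, or its external attachments were deleted'') omits the possibility that $z$ has no external attachment in the first place. The same claim is leaned on in your mixed case $|S\cap V(H)|=1$, which you moreover leave explicitly unresolved (``I return to this'') --- and that is precisely where the rim bookkeeping has to be done. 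To close these cases you would need that every vertex of $H-S$ reaches a surviving attachment vertex \emph{inside} $H-S$, and that does not follow from 2-connectivity plus internal 3-connectivity as defined here: internal 3-connectivity only provides disjoint paths from inner vertices to the outer cycle, not to the attachment vertices, and the outer cycle minus two vertices can leave a segment containing neither an attachment vertex nor a degree-two vertex. For instance, if $H$ is a hexagon $c_1\ldots c_6$ with one inner vertex adjacent to $c_1,c_2,c_3$ and the $w_i$ are attached at $c_4,c_5,c_6$, then $\{c_1,c_3\}$ separates $c_2$ and the inner vertex from every attachment vertex; so the connectivity you need genuinely depends on where the attachments sit, not merely on their number and the degree-two condition, and your proposal never engages with this.

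For comparison, the paper's proof takes a different route: it deletes all inner vertices of $H$, uses 2-connectivity of $H$ to view the remaining graph as $G$ with $v$ replaced by a cycle whose attachments go to at least three cycle vertices, invokes the known fact that such a cycle replacement in a 3-connected graph is again 3-connected, and then excludes cut pairs meeting the inner vertices by internal 3-connectivity (both vertices inner) or by 2-connectivity of $H$ together with the three disjoint attachment edges (one vertex inner, one outside). Your direct case analysis on $|S\cap V(H)|$ is a legitimate alternative strategy in principle, but as written its two decisive steps rest on the false claim (iv) or are deferred and never carried out, so the proposal does not yet constitute a proof.
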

\begin{proof}
Assume for a contradiction that $G'$ is not 3-connected and let $u,v$ be a pair of vertices such that deleting them generates a disconnected graph. We now show that such a pair cannot exist in $G'$.

Let $S$ be the set of vertices in $G'$ that correspond to the inner vertices  of $H$. Let $G''$ be the graph obtained by removing the vertices in $S$ from $G'$. Since $H$ is 2-connected, $G''$ can be seen as a graph  obtained from $G$ by replacing $v$ with a cycle where the neighbors of $v$ are connected to at least 3 distinct neighbors on the cycle. Such graphs are known to be 3-connected~\cite{mastersthesis}. Therefore, either both $u,v$ lie in $S$, or exactly one of them must lie in $S$.

First consider the case when both $u,v$ lie in $S$. Since deleting $u,v$ generates a disconnected graph, there must be a connected component $C$ that belongs to $H$. Let $w$ be a vertex in $C$. Then there cannot exist 3 vertex disjoint paths from $w$ to the outer face of $H$, which contradicts that $H$ is internally 3-connected.

Consider now that exactly one of $u$ and $v$   lies in $S$. Without loss of generality assume that $u$ belongs to $S$. Since $G''$ is 3-connected, deleting $v$ does not disconnect $G''$. Since $H$ is 2-connected, deleting $u$ does not disconnect $H$. By the construction there are three disjoint edges connecting the neighbors of $v$ and $H$. Hence deleting $u$ and $v$ cannot disconnect $G'$. 
\end{proof}


Let $\Phi$ be a positive planar  $3$-SAT and let $G$ be its corresponding SAT graph. Let $\Gamma$ be any arbitrary planar embedding of $G$. We call 
$\Gamma$ a ~\emph{quadrangulated SAT graph} if every face of $\Gamma$ has exactly four vertices, where two of them are clause vertices and two are variable vertices. A   \emph{clause graph} of a quadrangulated SAT graph $\Gamma$ is obtained by adding for every face, an edge between its  clause vertices, and finally, removing the variable vertices. 

\begin{figure}[h]
\centering
\includegraphics[width=.85\textwidth]{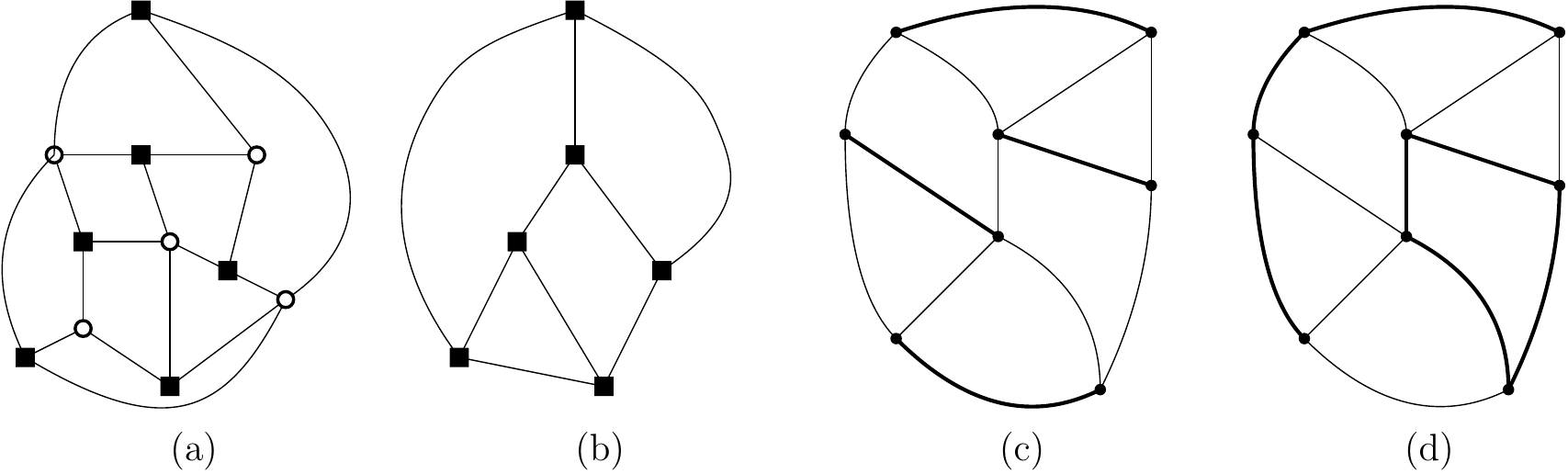}
\caption{(a) A quadrangulated positive planar   $3$-SAT graph $\Gamma$, where the clause vertices are shown in black square and the variable vertices are shown in circles. (b) 
The corresponding clause graph $C$ of $\Gamma$. (c) A  perfect matching and (d) a spanning 2-matching, where the edges in the matching and 2-matching are shown in bold. }
\label{figure:matching}
\end{figure}

A \emph{perfect matching} of a graph is a collection of edges $M$ such that every vertex is incident to exactly one edge in $M$, as illustrated in Figure~\ref{figure:matching}(c). A {cubic (resp. subcubic)} graph where every vertex is of degree exactly 3 (resp., at most 3). By Petersen's theorem~\cite{Petersen}, every bridgeless cubic graph has a perfect matching, and such a matching can be found efficiently under planarity constraint.  
\begin{lemma}[Biedl et al.~\cite{Biedl01}]
\label{lem:biedl}
A perfect matching in a planar  bridgeless cubic 
graph with $n$ vertices can be found in $O(n)$ time.
\end{lemma}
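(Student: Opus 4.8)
Since this statement is quoted verbatim from Biedl et al.~\cite{Biedl01}, I will only sketch how one would arrive at it. Existence of a perfect matching is not the issue: it is guaranteed by Petersen's theorem, as already recalled above. The entire difficulty is turning that existence proof into a \emph{linear}-time algorithm. One conceptually simple but too-slow route would be via Tait's theorem: a bridgeless planar cubic graph is $3$-edge-colorable, and each color class is a perfect matching -- but producing a $3$-edge-coloring goes through the Four Color Theorem algorithm and costs $\Omega(n^2)$. So the plan is instead to proceed in two stages: first reduce to the $3$-edge-connected case, and then solve that case by exploiting planarity directly.

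For the first stage, note that a bridgeless cubic graph that is not $3$-edge-connected has a $2$-edge-cut $\{e_1,e_2\}$ whose removal splits $G$ into two pieces; in each piece one suppresses the two degree-two endpoints by adding a single new edge, obtaining two smaller bridgeless cubic graphs $G_1'$ and $G_2'$. A perfect matching of $G_1'$ together with one of $G_2'$ can be glued into a perfect matching of $G$ after a small local case analysis on whether the two new edges were used (using a parity observation on how many cut edges a perfect matching may cross). Performing this recursively along the decomposition of $G$ into its $3$-edge-connected components -- a decomposition computable in linear time -- keeps the total work linear, so it remains to handle a single $3$-edge-connected planar cubic graph.

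For the second stage, I would exploit planar duality. In a cubic graph a perfect matching $M$ is exactly a minimum-cardinality edge set hitting every vertex an odd number of times, i.e., a minimum $V$-join: indeed $|J| = \tfrac12 \sum_v \deg_J(v) \ge n/2$ with equality precisely when every $\deg_J(v)=1$, and Petersen's theorem guarantees a $V$-join of size $n/2$ exists. By the standard planar reduction for $T$-joins, such a minimum join corresponds to a system of shortest paths in the planar dual joining up the terminals, and the non-crossing structure of shortest paths in a plane graph is what one hopes to leverage. \textbf{The main obstacle -- and the real content of the Biedl et al. result -- is exactly this:} making the dual/shortest-path computation run in $O(n)$ rather than the generic $O(n^{3/2})$- or $O(n\log n)$-type bounds for planar $T$-joins, which forces one to use both the rigidity of the target object (degree exactly one at every vertex, so no all-pairs structure is ever needed) and the bounded degree of $G$. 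A secondary point that also needs care is implementing the $2$-edge-cut reduction so that its total cost is $O(n)$, e.g.\ via a precomputed tree of $3$-edge-connected components rather than re-scanning the graph after each split.
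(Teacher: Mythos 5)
This lemma is not proved in the paper at all: it is imported verbatim as a known result of Biedl et al.~\cite{Biedl01}, so there is no in-paper argument to compare your proposal against, and for the purposes of this paper a citation is all that is required. Judged as a standalone proof, however, your write-up has a genuine gap that you yourself flag: the entire linear-time content is deferred. Your first stage (splitting along $2$-edge-cuts, suppressing the degree-two endpoints, recursing on the $3$-edge-connected pieces, and gluing matchings back together with a parity argument on the cut) does match the overall structure of the Biedl et al.\ algorithm and is fine modulo the bookkeeping needed to keep the total cost linear. But your second stage is only a statement of intent: reformulating a perfect matching of a cubic graph as a minimum $V$-join and appealing to planar $T$-join machinery does not yield $O(n)$ --- the known planar $T$-join/shortest-path reductions cost on the order of $n^{3/2}$ or $n\log n$, as you note --- and you give no mechanism for closing that gap. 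Saying that one must ``leverage'' the non-crossing structure and bounded degree identifies the obstacle without overcoming it, so the proposal establishes existence (which is just Petersen's theorem, already assumed) but not the claimed linear running time, which is precisely the substance of the lemma. If you wanted a self-contained justification you would need to reproduce the actual algorithmic argument of~\cite{Biedl01} for the $3$-edge-connected planar case rather than the $T$-join route; otherwise, the honest course is to do what the paper does and cite the result.
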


A \emph{$2$-matching} of a graph $G$ is a subgraph $H$ with maximum degree two. A 2-matching  $H$ is called  \emph{spanning} if every vertex of $G$  is incident to at least one edge of $H$, as shown in Figure~\ref{figure:matching}(d). A maximum cardinality 2-matching is a 2-matching that can be computed in $O(n^{1.5})$ time~\cite{DBLP:journals/siamjo/HartvigsenL11}.

Let $\Gamma$ be a planar embedding of a set of disjoint cycles. Then a \emph{genealogical tree} $T$ of $\Gamma$  is defined as follows: 
\begin{enumerate}[-]
    \item Each vertex $v$ in $T$ corresponds to a face $f_v$ in $\Gamma$. 
    \item The root of $T$ corresponds to the outerface of $\Gamma$.
    \item There exists an edge from a parent node $v$ to a child node $w$ if an only if the face $f_v$  encloses the face $f_w$, and $f_v$ and $f_w$ share  a common cycle on their boundaries.  
\end{enumerate}
\section{Positive Planar $3$-Connected NAE $3$-SAT}
\label{sec:ppc}
In this section we show that a positive planar $3$-connected NAE $3$-SAT is always satisfiable and a satisfying assignment can be obtained in $O(n+m)$ time, where $n$ and $m$ are the number of variables and clauses, respectively.  

\begin{theorem}
\label{th:nae3sat}
Let $R$ be an  arbitrary positive planar $3$-connected NAE $3$-SAT expression. Then $R$ is always satisfiable and a satisfiable assignment of $R$ can be computed in linear time. 
\end{theorem}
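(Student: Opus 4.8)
The plan is to reformulate the NAE 3-SAT instance as a graph-coloring / partitioning problem and exploit the 3-connectivity of the SAT graph to force a good embedding. A NAE assignment of a positive 3-SAT instance is exactly a 2-coloring of the variables (say colors $0$ and $1$) such that no clause is monochromatic. Since each clause has exactly three literals and all are positive, a clause is unsatisfied in the NAE sense precisely when all three of its variables receive the same color. So the goal reduces to: color the variable vertices of the SAT graph $G$ with two colors so that no clause vertex sees three neighbors of one color. I would first argue that in a 3-connected planar SAT graph every clause vertex has degree exactly $3$ (a clause has at most three literals, and a vertex of degree $\le 2$ would violate 3-connectivity, so degree is exactly $3$; this already rules out $1$- and $2$-literal clauses), and likewise every variable vertex has degree $\ge 3$. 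This structural normalization is what makes the machinery from the Preliminaries (quadrangulation, clause graph, matchings, Lemma~\ref{3con}) applicable.

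Next I would pass to a planar embedding $\Gamma$ of $G$ and refine it to a quadrangulated SAT graph: because $G$ is bipartite (clause vs.\ variable) every face has even length, and I would subdivide larger faces by adding chords between appropriate clause/variable pairs — or, more carefully, argue that one may choose the embedding and add dummy incidences so that every face becomes a $4$-face with two clause and two variable vertices, while keeping the graph $3$-connected (here Lemma~\ref{3con} is the tool that guarantees $3$-connectivity is preserved under the local replacements). From the quadrangulated graph form the clause graph $C$: it has one vertex per clause, and each $4$-face $\{c,x,c',x'\}$ contributes an edge $cc'$. Since each clause vertex has degree $3$ in $\Gamma$, it lies on exactly three $4$-faces, so $C$ is a planar cubic (sub)graph. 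The key observation to set up is that a two-coloring of the variables with no monochromatic clause corresponds to orienting/selecting, for each clause $c$, at least one incident face whose two variables get different colors — equivalently, to a suitable subgraph of $C$. I would aim to show that a spanning $2$-matching of $C$ (which exists and is found via the Petersen/Biedl route, Lemma~\ref{lem:biedl}, since a planar bridgeless cubic graph even has a perfect matching) yields the desired variable coloring: traverse the components of the $2$-matching (paths and cycles in $C$, which live inside the planar embedding as disjoint cycles/paths), use the genealogical tree $T$ of these cycles to two-color the regions they bound, and read off the variable colors from which side of a separating cycle each variable vertex lies on, so that every clause — being incident to an edge of the $2$-matching — straddles a color boundary and is therefore bichromatic.

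For the running time, each ingredient is linear: building the embedding and quadrangulating is $O(n+m)$ (planarity testing and embedding), constructing the clause graph is $O(n+m)$, finding the perfect matching in the planar bridgeless cubic clause graph is linear by Lemma~\ref{lem:biedl}, and propagating colors along the genealogical tree is a single traversal. Hence the whole assignment is produced in $O(n+m)$ time, and since the construction always succeeds, $R$ is always satisfiable.

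The main obstacle I expect is the quadrangulation step and the precise correspondence between $2$-matchings (or perfect matchings) of the clause graph and valid NAE colorings: one must verify that adding chords to make all faces quadrilaterals can be done without merging variables improperly or destroying $3$-connectivity, and that the parity/region argument via the genealogical tree really assigns the two endpoints of each selected face-edge opposite colors for \emph{every} clause simultaneously, not just locally. Handling clauses whose three faces interact (a clause incident to two or three selected matching edges) and making sure the color propagation is globally consistent across nested cycles is the delicate part; I would isolate it as a lemma stating that any spanning $2$-matching of the cubic clause graph induces a consistent two-coloring of the variable vertices in which no clause is monochromatic, proved by induction on the genealogical tree $T$.
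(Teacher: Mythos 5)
Your core pipeline -- clause graph from the quadrangulated embedding, Petersen/Biedl perfect matching, complement of the matching is a disjoint union of cycles, genealogical tree of those cycles, two-colouring of the regions, read off the truth values -- is exactly the paper's argument for the special case where the SAT graph is already quadrangulated (its Lemma~\ref{lemma:pln_another}). The genuine gap is the step you yourself flag as the obstacle: forcing quadrangulation. In the bipartite SAT graph any chord added inside a face of length greater than four joins a clause vertex to a variable vertex, i.e.\ it adds a literal to a clause. That breaks the reduction in the direction you need: an assignment that is not-all-equal on the enlarged clause $(x\vee y\vee z\vee w)$ can still be all-equal on the original $(x\vee y\vee z)$, so a NAE solution of the quadrangulated instance need not be a solution of $R$. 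It also pushes clause degrees above three, so the clause graph is no longer cubic and the ``each clause is split by a separating cycle'' argument only controls the enlarged literal set, not the original three variables. Your fallback phrase about ``dummy incidences'' is precisely the construction that is missing, and Lemma~\ref{3con} by itself does not supply it.

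The paper avoids this entirely by never modifying the SAT instance: it defines a \emph{saturated clause graph} $C_s$, adding for each quadrangular face an edge between its two clause vertices and, inside each face of length more than four, a cycle of dummy vertices attached one-to-one to the clause vertices on that face. Since 3-connectivity forces every clause vertex to have degree exactly three (hence to lie on exactly three faces), $C_s$ is planar and cubic, and 3-connectivity of $G$ again yields bridgelessness -- a point you assert but do not prove, although it is exactly where the hypothesis of the theorem is used. The matching/genealogical-tree argument then runs on $C_s$ unchanged, the dummy vertices playing no role in the truth assignment. Two smaller slips: you speak of a spanning $2$-matching whose components may be paths, but the argument needs the $2$-factor $C_s\setminus M$ (complement of a perfect matching in a cubic graph), all of whose components are cycles -- path components would not bound regions, so the genealogical-tree colouring would not apply; and the final separation claim should be argued as in the paper, namely that the cycle of $C_s\setminus M$ through a clause vertex leaves at least one of its three incident faces (hence at least one of its variables) on each side, so the two-colouring of nested regions makes every clause bichromatic.
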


Since the SAT graph $G$ is 3-connected, it has a unique plane embedding (upto the choice of the outerface) and hence the clause graph is the same for every choice of the outerface. Before we prove Theorem~\ref{th:nae3sat}, we consider a simpler case when the SAT graph $G$ is planar, 3-connected and quadrangulated, as stated   below.



\begin{lemma}
\label{lemma:pln_another}
 Let $R$ be any arbitrary positive planar $3$-connected NAE $3$-SAT instance and let $G$ be the corresponding planar SAT graph. If $G$ is quadrangulated, then $R$ is always satisfiable.
\end{lemma}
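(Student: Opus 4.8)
The plan is to reduce the quadrangulated case to a matching problem on the clause graph. Starting from the quadrangulated SAT graph $G$ with embedding $\Gamma$, form the clause graph $C$ as defined above (add an edge between the two clause vertices of each quadrilateral face, then delete the variable vertices). Observe that in a positive NAE 3-SAT, a clause is satisfied exactly when its three variables are not all assigned the same value; equivalently, a truth assignment corresponds to a 2-coloring of the variable vertices, and we want every clause vertex to see both colors among its (at most three) neighbors. I would translate this requirement into a condition on $C$: each face of $\Gamma$ contributes an edge of $C$ joining two clauses that share the same pair of variables, so if those two shared variables receive different colors then both incident clauses are automatically satisfied by that face. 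The key claim I would aim for is that it suffices to choose a 2-matching (or matching) structure in $C$ that "covers" every clause vertex, and then assign colors to variables to realize it.

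Concretely, the first step is to verify structural properties of $C$: since $G$ is 3-connected and quadrangulated, every clause vertex has degree exactly $3$ in $G$ (as each clause has three literals and the quadrangulation forces each to appear on the clause-vertex's rotation exactly once), so each clause vertex lies on exactly three quadrilateral faces and hence has degree $3$ in $C$; thus $C$ is a cubic planar graph. The second step is to argue $C$ is bridgeless — this should follow from 3-connectivity of $G$ via a cut-argument (a bridge in $C$ would correspond to a separating structure of size $\le 2$ in $G$), so Lemma~\ref{lem:biedl} / Petersen's theorem applies and $C$ has a perfect matching $M$. The third step is the crux: use $M$ to define the truth assignment. For each matched edge of $C$, which arose from a quadrilateral face with variable pair $\{x,y\}$, I would set $x$ and $y$ to opposite values (propagating consistently); for the three faces around a clause vertex $c$, exactly one of its incident edges is in $M$, and that face alone guarantees $c$ sees both truth values. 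One must then check the propagation is globally consistent — i.e., that forcing "opposite values" across matched faces never creates a contradiction — and that unmatched variables can be assigned arbitrarily.

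The hard part will be controlling the propagation of forced variable values and proving it is consistent: a single variable can appear in many clauses and on many faces, and the constraint "the two variables of a matched face differ" links variables across the whole instance, so one must show these constraints are simultaneously satisfiable (e.g. by showing the "conflict graph" on variables is bipartite, perhaps exploiting planarity and the genealogical-tree structure, or by a direct local argument that each variable's forced edges are consistent). An alternative, possibly cleaner route that I would keep in reserve: instead of a perfect matching in $C$, take a spanning subgraph of $C$ in which every clause vertex has degree exactly $1$ is too strong; rather, orient or 2-color directly. In fact, a natural approach is to 2-color the faces or use that $C$ cubic bridgeless also has a proper 3-edge-coloring (Vizing/Tait in the planar case would need 4-color theorem) — but to stay elementary and linear-time I would commit to the perfect-matching argument via Lemma~\ref{lem:biedl}, handle consistency by choosing the variable assignment face-by-face along the genealogical tree of a suitable cycle structure, and defer the general (non-quadrangulated) case to the subsequent proof of Theorem~\ref{th:nae3sat} where Lemma~\ref{3con} lets us reduce to the quadrangulated situation.
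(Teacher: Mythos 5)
Your steps 1 and 2 match the paper's proof exactly: $C$ is cubic because each clause vertex lies on exactly three quadrilateral faces, a bridge $(v,w)$ is ruled out by taking the quadrilateral face $p,v,q,w$ and a third vertex-disjoint $v$--$w$ path in $G$ avoiding $p,q$ whose clause vertices reconnect $v$ and $w$ in $C$, and Petersen's theorem (Lemma~\ref{lem:biedl}) gives a perfect matching $M$. The gap is precisely in your step 3, the crux. The rule ``for each matched edge, set the two variables of that face to opposite values'' is not merely hard to verify for consistency --- it can be infeasible. Take $C$ to be the triangular prism (planar, cubic, 3-connected) and let $G$ be its radial (vertex--face incidence) graph, which is a quadrangulated, bipartite, 3-connected SAT graph whose clause vertices are the six prism vertices and whose variable vertices are the five prism faces. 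With the perfect matching $M$ consisting of the three ``vertical'' prism edges, the three matched faces of $G$ have as their variable pairs the three pairs among the three square faces of the prism, so your ``must differ'' constraints form a triangle on three variables and cannot all hold. So no propagation or bipartiteness argument can rescue the scheme as stated for an arbitrary perfect matching.

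The paper uses $M$ in the opposite way, and its assignment in fact gives the two variables of a matched face \emph{equal} values. Delete $M$ from the cubic graph $C$; then $C'=C-M$ is a disjoint union of cycles. Two-color the genealogical tree of this cycle arrangement and set a variable to true exactly when the face of the arrangement containing it gets the first color. For a clause vertex $q$, the cycle $D$ of $C'$ through $q$ uses the two unmatched edges at $q$, so the three variable neighbors of $q$ split, nonempty on each side, between the face of the arrangement immediately inside $D$ and the one immediately outside $D$ (the two variables of the matched face on one side, the third variable on the other); these two faces are parent and child in the genealogical tree, hence differently colored, so every clause sees both truth values. Your ``alternative in reserve'' (a genealogical tree of a suitable cycle structure) points in this direction, but without identifying the structure as $C-M$ and the invariant as ``opposite sides of the cycle through $q$'' --- rather than ``opposite values across the matched face'' --- the proposal does not yet constitute a proof; in the prism example it would have to be abandoned rather than repaired.
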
 
\begin{proof}
We first show that the clause graph corresponding to $G$ must have a perfect matching and we use that matching to find a satisfying truth assignment. Since $G$ is quadrangulated, every face in $G$ has exactly four vertices, where two are variable vertices and  two are clause vertices. Let $C$ be the clause graph obtained from $G$. Since each clause  has exactly  three literals, the corresponding clause vertex $v$  has exactly 3 neighbors in $G$. Since $v$ is incident to exactly 3 faces in $G$, it must have exactly three neighbors in the clause graphs. Therefore, the clause graph $C$ is a cubic graph. We now show that $C$ is a bridgeless  cubic graph. Suppose for a contradiction that $C$  has a bridge $(v,w)$ and deleting the bridge results into two disjoint connected components $H_1$ and $H_2$ (e.g. see Figure~\ref{figure:clause-tree}(a)). Since the faces of $G$ are quadrangulated, there must be a face $p,v,q,w$ in $G$, where $p,q$ are variable vertices. Since $G$ is 3-connected,  there must be three vertex disjoint paths between $v$ and $w$. Hence there exists  a path $v,\ldots,w$ in $G$ that does not pass  through $p$ or $q$  (e.g. see Figure~\ref{figure:clause-tree}(b)). The sequence of clause vertices in this path connects $v$ and $w$ in $C$  (e.g. see Figure~\ref{figure:clause-tree}(c)). Therefore, $(v,w)$ cannot be a bridge in $C$. By Petersen's theorem~\cite{Petersen}, $C$ contains a perfect matching $M$.

\begin{figure}[h]
\centering
\includegraphics[width=.55\textwidth]{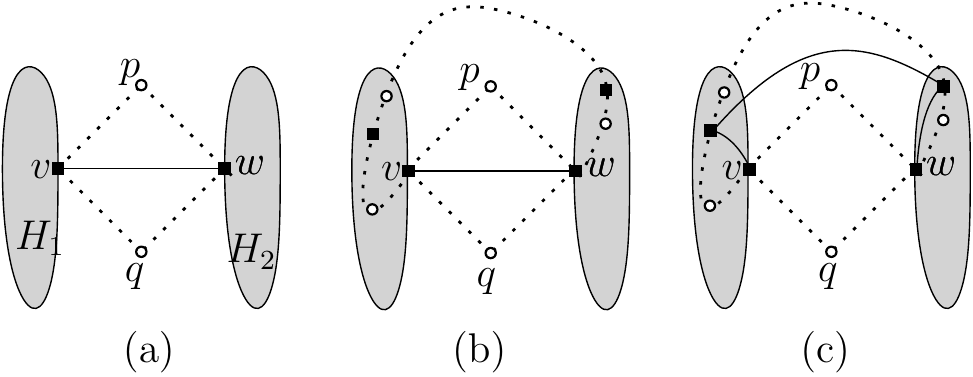}
\caption{Illustration for the proof of  Lemma~\ref{lemma:pln_another}. (a) A bridge. (b) The corresponding face and a $v$ to $w$ path that does not go through $p$ and $q$. (c) A path connecting $v$ and $w$ after deleting the edge $(v,w)$.}
\label{figure:clause-tree}
\end{figure}



We now show how to compute a satisfying assignment for the NAE 3-SAT instance $R$. Let $C'$ be the graph obtained from $C$ by deleting the edges of $M$, e.g. see Figures~\ref{figure:clause-tree2}(a)--(b).  Since $C$ is a planar  cubic graph, $C'$ must be a planar disjoint collection of cycles. Let $T$ be the genealogical tree of $C'$. We compute a two coloring of $T$ with red and black colors, e.g. see Figure~\ref{figure:clause-tree2}(c). For each vertex  (i.e., face) which has been colored red, we set the corresponding variable vertices (i.e., the variable vertices lying inside  face) to be true~\ref{figure:clause-tree2}(d).

\begin{figure}[h]
\centering
\includegraphics[width=1\textwidth]{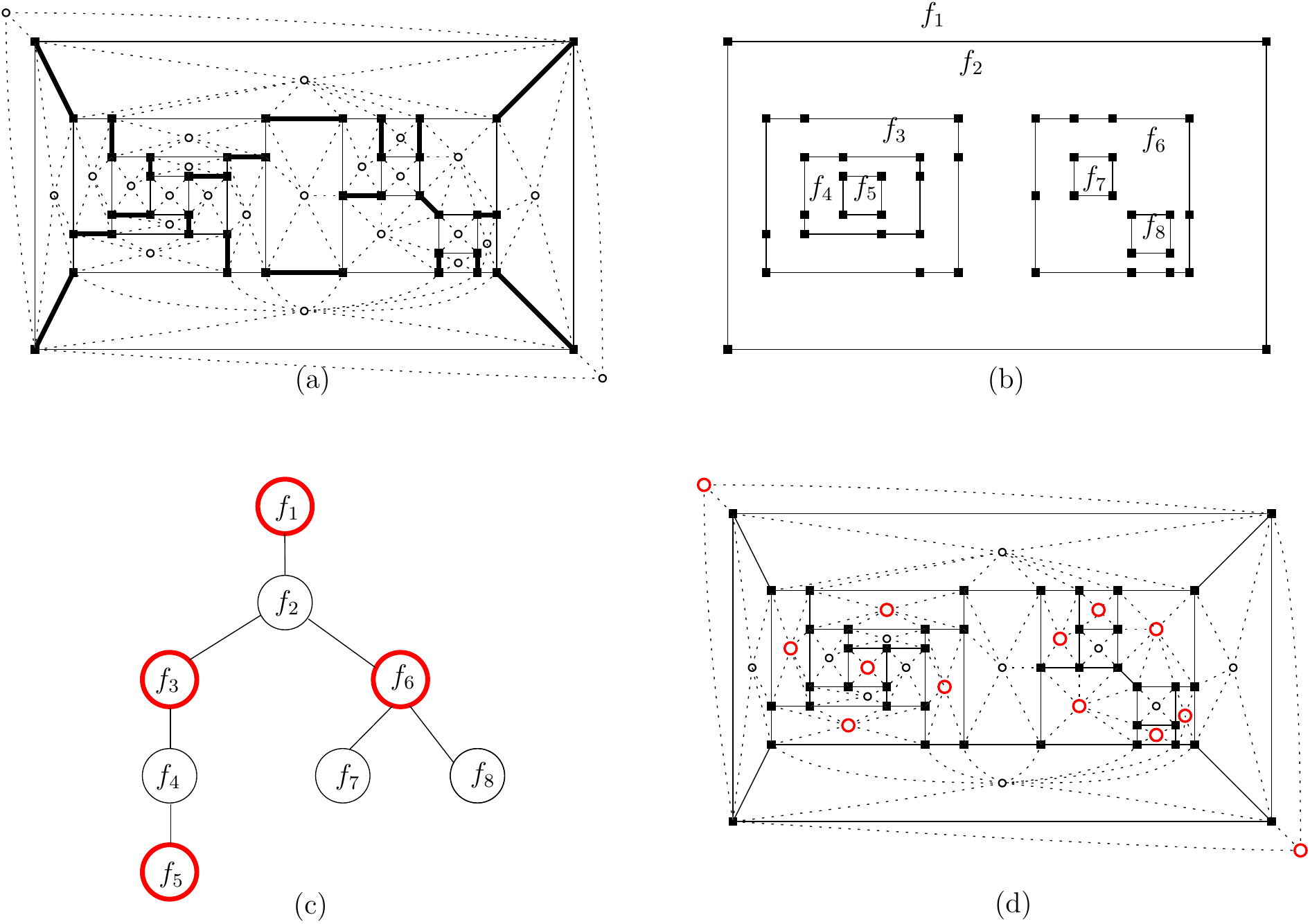}
\caption{Illustration for computing a satisfying truth assignment from a perfect matching. (a) A quadrangulated SAT graph, where clause graph $C$ is shown in solid lines. A perfect matching $M$ is shown in bold. (b) The graph $C'$. (c) A genealogical tree $T$, with a two coloring where  $\{f_1,f_3,f_6,f_5\}$ are colored with the same color. (d) A satisfying  truth assignment obtained from the  two coloring of $T$.}
\label{figure:clause-tree2}
\end{figure}

We now prove that the resulting truth assignment is a satisfying truth assignment, i.e., for each clause at least one variable vertex must be true and at least one must be false. Consider a clause vertex $q$ in $G$. Note that $q$ is incident to three variable vertices  $v_1,v_2$ and $v_3$ in $G$, and let $D$ be the cycle passing through $q$ in  $C'$. Then  $D$ either contains two variable vertices in its interior and the other variable vertex remains  outside, or $D$ contains one variable vertex in its interior and the other two remain   outside. Since the genealogical tree is two colored, at most two of these variable vertices of $q$ can be true, and the remaining ones must be false. 
\end{proof}

\begin{figure}[h]
\centering
\includegraphics[width=.8\textwidth]{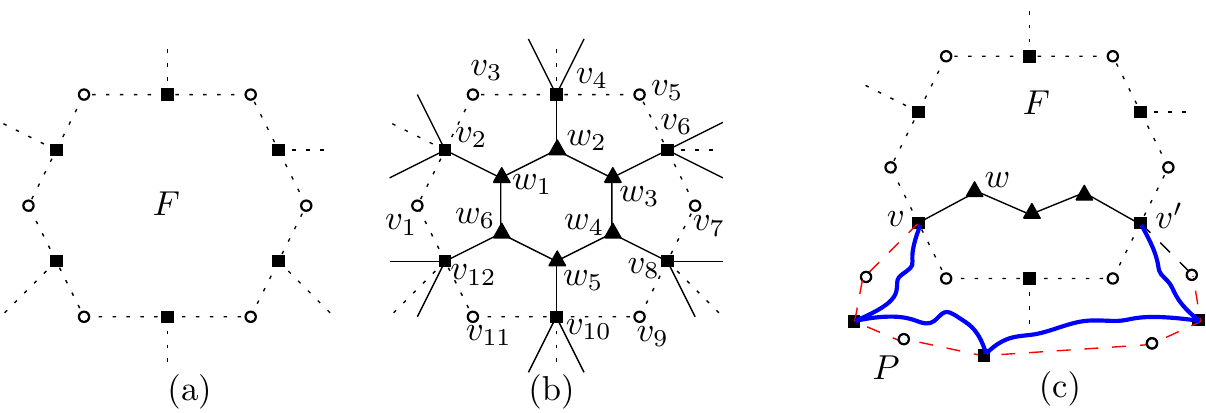}
\caption{Illustration for the proof of  Lemma~\ref{lemma:pln_another}. (a) A face of length 12 in $G$.  (b) Illustration for the saturation operation, where the clause vertices are shown in squares and the vertices of the added cycle are shown in triangles. (c) The vertices $v$ and $w$ lie on a cycle in $C_s$.}
\label{figure:saturation}
\end{figure}

We now consider the case when the SAT graph $G$ is not necessarily quadrangulated. If the SAT graph is not quadrangulated, then the clause graph may not be cubic or planar. Therefore, instead of using the clause graph, we define a \emph{saturated clause graph $C_S$}    as follows. 

\textbf{Saturated clause graph $C_s$:} Let $\Gamma$ be a plane embedding of $G$. Let $F$ be a face $v_1,v_2,\ldots, v_p$ of $\Gamma$ with $p>4$ vertices. Since $G$ is 3-connected and bipartite, $p$ must be even and exactly half of the vertices would be clause vertices. Let $v_2,v_4,\ldots,v_{p}$ be the clause vertices. We define a \emph{saturation operation} that first  adds a cycle $w_1,w_2,\ldots, w_{p/2}$ of $p/2$ dummy   vertices interior to $F$ and  then adds the edges  $(w_i,v_{2i})$.   Figures~\ref{figure:saturation}(a)--(b) illustrate the saturation operation. The saturated clause graph is obtained by adding for each quadrangular face an edge between the clause vertices, and then  applying the saturation operation to all the faces of length more than four in $\Gamma$.

We are now ready to prove Theorem~\ref{th:nae3sat}.
\smallskip



\noindent
\textbf{Proof of Theorem~\ref{th:nae3sat}}. Let $R$ be any arbitrary positive planar $3$-connected NAE $3$-SAT expression and let $G$ be the corresponding planar 3-connected SAT  graph. Let $C_s$ be the saturated clause graph of $G$. Since $G$ is 3-connected, each clause vertex in $G$ is adjacent to exactly three variable vertices. Therefore, it is straightforward to observe from the  construction of saturated clause graph that $C_s$ is planar and cubic. We now show that $C_s$ is bridgeless. 

Suppose for a contradiction that $(v,w)$ is a bridge in $C_s$. If $(v,w)$ is an edge inside   a quadrangular face of $G$, then we can prove that there must be another path connecting $v$ and $w$ in $C_s$ in the same way as we proved the clause graph to be bridgeless in Lemma~\ref{lemma:pln_another}.
 If $(v,w)$ is an edge that has been added during the saturation operation on some face $F$, then both $v$ and $w$ cannot be on the added cycle. Therefore, we may assume without loss of generality that $v$ is a clause vertex (Figure~\ref{figure:saturation}(c)) and $w$ is a dummy vertex. Let $v'$ be another clause vertex on $F$. Since $G$ is 3-connected, there must be a path $P$  in $G$ between $v$ and $v'$ that does not contain any vertex of $F$. Hence we can construct a path in $C_s$ between $v$ and $v'$ outside of $F$, and extend it inside $F$ to form a cycle that contains $(v,w)$. Hence $(v,w)$ cannot be a bridge in $C_s$.
 
Since $C_s$ is planar bridgeless cubic graph, by Petersen's theorem~\cite{Petersen} $C_s$ contains a perfect matching. We can now use  the same argument as in the proof of  Lemma~\ref{lemma:pln_another} using this perfect matching to construct a satisfying truth assignment of $R$.

It now remains to prove that the time complexity of the whole process is linear in the number of vertices of $G$. A planar embedding $\Gamma$ of the SAT graph $G$ can be obtained in linear time. The construction of $C_s$ requires iterating through each face of $\Gamma$ and spending a time proportional to the length of each face. Hence we can compute $C_s$ in linear time. Since $C_s$ is a planar bridgeless cubic graph, by  Lemma~\ref{lem:biedl}, one can obtain a perfect matching $M$ of $C_s$ in linear time. Given a perfect matching, one can delete the edges of $M$  from $\Gamma$ and then recursively traverse the cycles on the outer face to construct the genealogical tree $T$. Thus the construction of the $T$ takes linear time. Finally, coloring the tree with two colors and setting the corresponding truth values takes a linear-time traversal of the tree and a linear-time traversal of $G$. Thus the overall time complexity remains linear.


\section{Positive Planar  NAE $3$-SAT without 3-Connectivity Constraint}
\label{sec:lb}
In this section we consider the case of general  planar SAT graphs. We show that the problem of solving a positive planar  NAE 3-SAT is as hard as the problem of deciding whether a  planar cubic graph contains a spanning 2-matching. Figure~\ref{figure:eq}(a) illustrates a spanning 2-matching in a cubic graph. Although a rich body of literature examines 2-factor and maximum 2-matching in cubic graphs~\cite{DBLP:journals/disopt/Kobayashi10}, to the best of our knowledge, no linear-time algorithm is known for deciding whether a planar cubic graph admits a spanning 2-matching. 

\begin{theorem}
The problem of deciding whether  a  connected planar cubic graph admits a spanning 2-matching is   linear-time reducible to  positive planar   NAE 3-SAT.
\end{theorem}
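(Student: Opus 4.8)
The plan is to give a direct syntactic reduction. Starting from a connected planar cubic graph $G$, I would introduce one Boolean variable $x_e$ for each edge $e$ of $G$ and, for every vertex $v$ of $G$ whose three incident edges are $e_1,e_2,e_3$, create the clause $(x_{e_1}\vee x_{e_2}\vee x_{e_3})$. Let $\Phi$ denote the resulting formula. Every clause has exactly three positive literals because $G$ is cubic, so $\Phi$ is a positive $3$-SAT instance, and it is built in a single pass over $G$, hence in time linear in the size of $G$. The SAT graph $G(\Phi)$ has a clause vertex $c_v$ for each vertex $v$ of $G$ and a variable vertex $x_{uv}$ for each edge $uv$ of $G$, with $x_{uv}$ adjacent exactly to $c_u$ and $c_v$; this is precisely the graph obtained from $G$ by subdividing every edge once. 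Subdivision preserves planarity, so $G(\Phi)$ is planar, and therefore $\Phi$ is a positive planar NAE $3$-SAT instance.

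The core of the argument is the equivalence: $\Phi$ is NAE-satisfiable if and only if $G$ has a spanning $2$-matching. Given any truth assignment, set $S=\{e : x_e=\mathrm{true}\}$. For a vertex $v$, the clause at $v$ has exactly three literals, so it has at least one true and at least one false literal exactly when the number of edges of $S$ incident to $v$ lies in $\{1,2\}$. Hence the assignment NAE-satisfies every clause if and only if every vertex of $G$ has degree $1$ or $2$ in $S$, i.e., if and only if $S$ is a spanning $2$-matching; conversely, a spanning $2$-matching $S$ of $G$ induces the NAE-satisfying assignment $x_e=\mathrm{true}\iff e\in S$. Combining this with the previous paragraph yields the claimed linear-time reduction.

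Because the construction is so direct, I do not expect a substantial obstacle; the only point requiring care is the equivalence ``the clause at $v$ is NAE-satisfied $\iff$ $v$ has $S$-degree in $\{1,2\}$'', which follows immediately from the fact that the clause has exactly three literals. If $G$ is allowed to be a multigraph, one should additionally note that parallel edges keep $G(\Phi)$ planar and bipartite, and that loops can be eliminated by trivial preprocessing, so the reduction applies verbatim to the simple cubic graphs of interest.
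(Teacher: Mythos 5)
Your proposal is correct and is essentially the paper's own reduction: subdividing each edge of the cubic graph (equivalently, one positive variable per edge and one three-literal clause per vertex) and observing that a clause is NAE-satisfied exactly when its vertex has degree $1$ or $2$ among the true edges, which matches the spanning 2-matching condition. No meaningful difference from the paper's argument.
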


\begin{proof}
Let $G$ be a planar cubic graph. 
We construct a graph $G'$ by subdividing each 
edge of $G$ with a division vertex, i.e., each edge $(u,v)$ of $G$ is replaced by a path $u,d_{uv},v$ in $G'$, where $d_{uv}$ is the division vertex. 

We now consider $G'$ as a SAT graph where the original vertices of $G$ are the clause vertices  and the division vertices are the variable vertices. In the following we show that $G$ has a spanning 2-matching if and only if the planar   NAE 3-SAT $\mathcal{I}$ corresponding to $G'$ has an affirmative not-all-equal solution.

\begin{figure}[h]
\centering
\includegraphics[width=.8\textwidth]{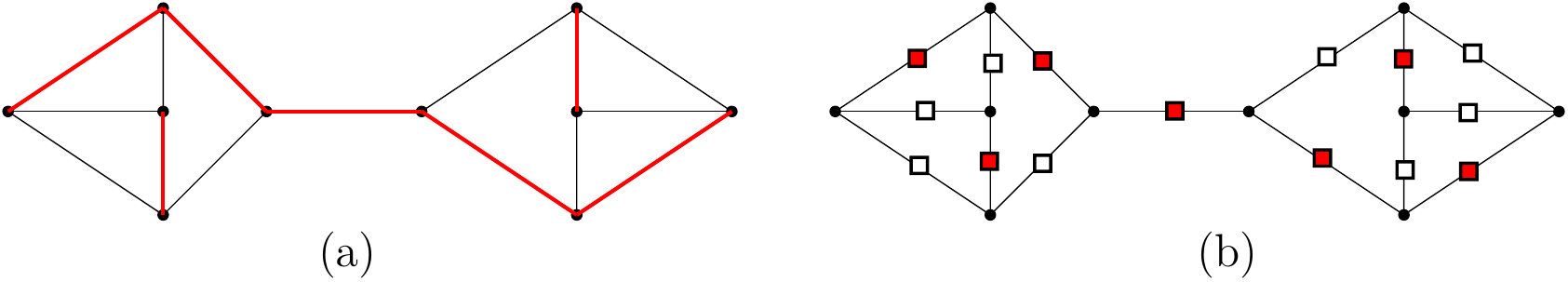}
\caption{(a) A planar cubic graph $G$ with a spanning 2-matching (shown in red).   (b) The corresponding positive planar NAE 3-SAT, and the associated affirmative solution. The literals which are true are shown in red.  }\label{figure:eq}
\end{figure}

First assume that $G$ has a spanning 2-matching (i.e., a spanning subgraph with maximum degree 2), and let $F$ be the set of edges in that  spanning subgraph. Let $F_d$ be  the division vertices in $G'$ that correspond to $F$.  We now set  the literals of $\mathcal{I}$ determined by $F_d$ to be true, and the remaining literals to false. Suppose for a contradiction that such an assignment gives rise to a clause with all true or all false literals. If all three are false, then it contradicts that $F$ is spanning. If all three are true, then there must be a vertex in $G$ that is incident to three edges in $F$, contradicting that $F$ corresponds to a subgraph with maximum degree 2.

Assume now that $\mathcal{I}$ has an affirmative not-all-equal solution. We construct a set $D$ of division vertices by taking for each clause vertex, the division vertices corresponding to the literals which are true. Let $F$ be the edges of $G$ corresponding to the set $D$. 


We now show that that $F$ corresponds to a spanning 2-matching of $G$. Since every clause contributed to $D$, $F$ must be a spanning subgraph. Suppose for a contradiction that the graph determined by $F$ contains a vertex of degree 3. By construction of $F$, this implies that there is a clause where all its literals are assigned the value true, which contradicts that the initial assignment is a not-all-equal assignment.
%

\end{proof} 

\section{Positive Planar $3$-Connected $1$-in-$3$-SAT}
\label{sec:hard}
The 1-in-3-SAT problem remains NP-hard even if each variable appears in at most 4 clauses~\cite{DBLP:journals/corr/abs-1802-09465}. Laroche~\cite{Laroche} proved that the  positive planar  1-in-3-SAT problem is NP-complete.   Moore and 
Robson~\cite{DBLP:journals/dcg/MooreR01}   proved that the  positive planar  1-in-3-SAT remains NP-hard even when the SAT graph is cubic. None of these hardness reductions ensures the 3-connectivity of the underlying SAT graph. However,  the problem remains hard even in cases where   additional edges can be added to the SAT graph (keeping it planar) to form a cycle containing all variable and clause vertices~\cite{DBLP:journals/dmtcs/Pilz19}. Although such an edge  augmented graph may be 3-connected, the SAT graph itself may not be 3-connected. 

In this section we show that positive planar 1-in-3-SAT remains NP-hard even when the SAT graph is 3-connected and each  variable  appears in at most 4 clauses.

\subsection{Outline of the Reduction}
We reduce planar 3-connected 3-SAT which is shown to be NP-hard even when every variable appears in at most 4 clauses~\cite{DBLP:journals/dam/Kratochvil94}. For convenience we will refer to the SAT graph  as a 3-SAT graph or a 1-in-3-SAT graph depending on the SAT instance. Let $G$ be a 3-SAT graph corresponding to a planar 3-connected 3-SAT instance $I$, as illustrated in Figure~\ref{fig:one-in-three}(a). The hardness reduction is carried out in two phases. 

In the first phase, we replace each variable with a variable gadget and each clause with a clause gadget. We will use the resulting planar graph $G'$ as a 1-in-3-SAT graph of a positive planar 1-in-3-SAT instance $I'$ and show that $I$ is satisfiable if and only if $I'$ is satisfiable. While constructing $G'$, we will ensure that every variable appears in at most 4 clauses. However, $G'$ would not be 3-connected. In the second phase, we will add additional gadgets to $G'$ to construct a planar 3-connected graph $G''$ ensuring that each variable appears in at most 4 clauses. 

To complete the proof we will use the resulting planar graph $G''$ as a 1-in-3-SAT graph of a positive planar 1-in-3-SAT instance $I''$ and show that $I$ is satisfiable if and only if $I''$ is satisfiable. 

\subsection{Construction of $G'$}
The graph $G'$ is constructed by replacing the vertices and clauses with the vertex and clause gadgets. 

\subsubsection{Clause Gadgets} We first replace each variable vertex $v$ of degree $d$ in $G$  with a cycle $L_v$ of $d$ vertices. We will refer to $L_v$ as the \emph{lamina of $v$}. We then connect the  vertices on $L_v$ with the neighbors of $v$ such that the resulting graph remains  planar, e.g., see Figures~\ref{fig:one-in-three}(a)--(b). For each clause vertex $c = (x\vee y \vee z)$, the clause gadget contains the clauses  $c_1 = (\overline{x} \vee p \vee r), c_2 =  (y\vee p\vee q)$ and $c_3 =  (\overline{z}\vee s\vee q)$.  Figure~\ref{fig:one-in-three}(c) illustrates a clause gadget in blue. It is known that $c$ evaluates to true if and only if $(c_1 \wedge c_2 \wedge c_3)$ admits a satisfiable truth assignment where each clause contains exactly one true value~\cite{DBLP:journals/corr/abs-1802-09465}. 


\begin{figure}[h]
\centering 
\includegraphics[width=.8\textwidth]{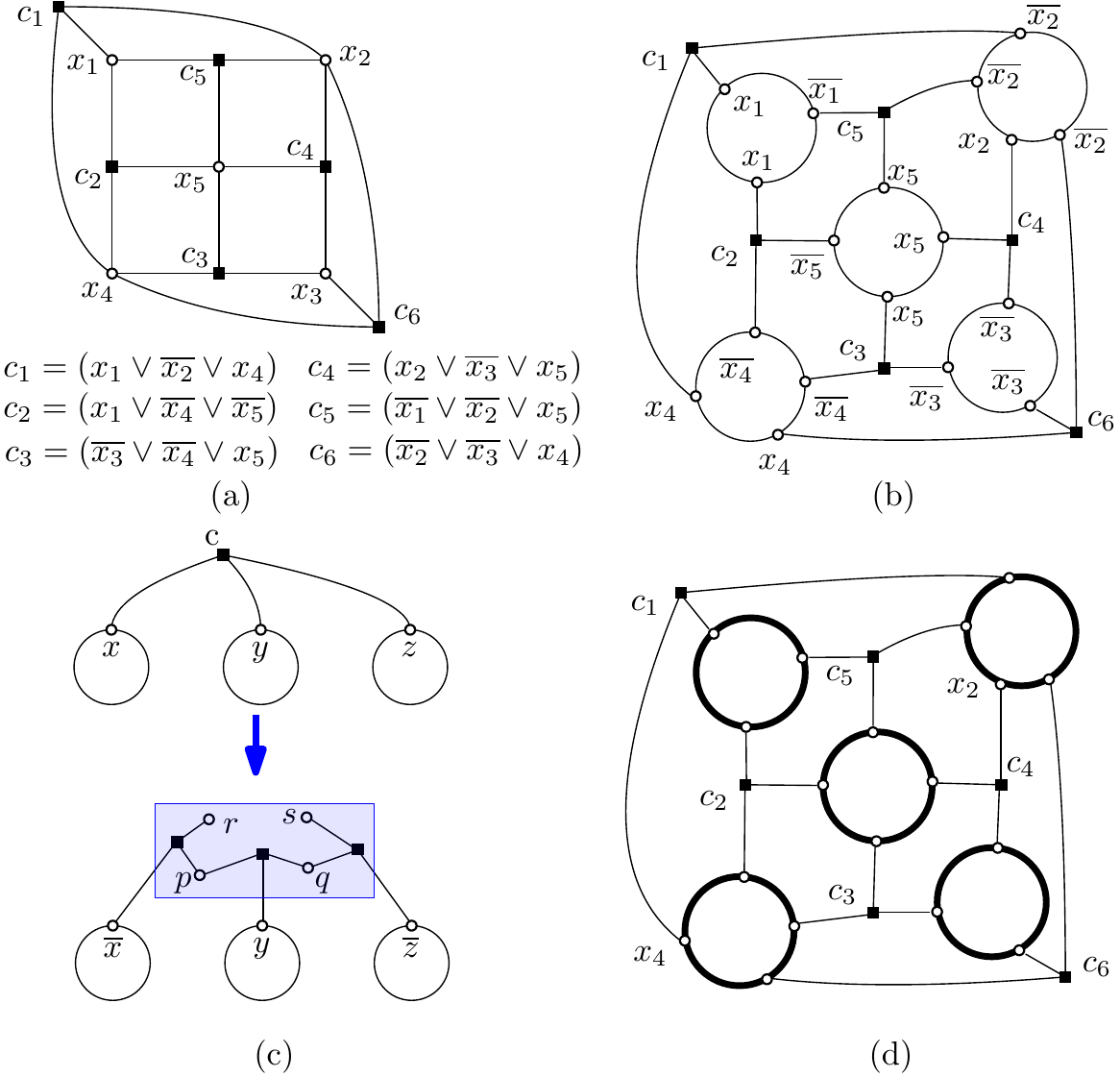}
\caption{(a) The 3-SAT graph $G$. (b) Replacing each  variable vertex with a lamina. (c)--(d) Construction of $G'$. The clause gadget is shown in a blue rectangle. The variable gadget is shown in a thick black circle.  }\label{fig:one-in-three}
\end{figure}

\subsubsection{Variable Gadgets} To remove the negated variables we now  replace each lamina with a variable gadget.  The variable gadget consists of  one inner ring, and one or more outer rings, as described below. 

A \emph{$k$-ring} is a planar 1-in-3-SAT graph of $3k$ clause vertices on its outerface, as illustrated in Figure~\ref{fig:ring}(a). Here  $k\ge 3$ is a positive integer. A $k$-ring consists of $k$ groups, each containing 6 clause vertices, as shown in red shaded region. A $(k+1)$-ring can be constructed by adding a group to a $k$-ring, as shown in Figure~\ref{fig:ring}(b). Later, we will refer to a $k$-ring  just as a \emph{ring}  for simplicity. We will use the following property of a ring. 

\begin{figure}[h]
\centering 
\includegraphics[width=\textwidth]{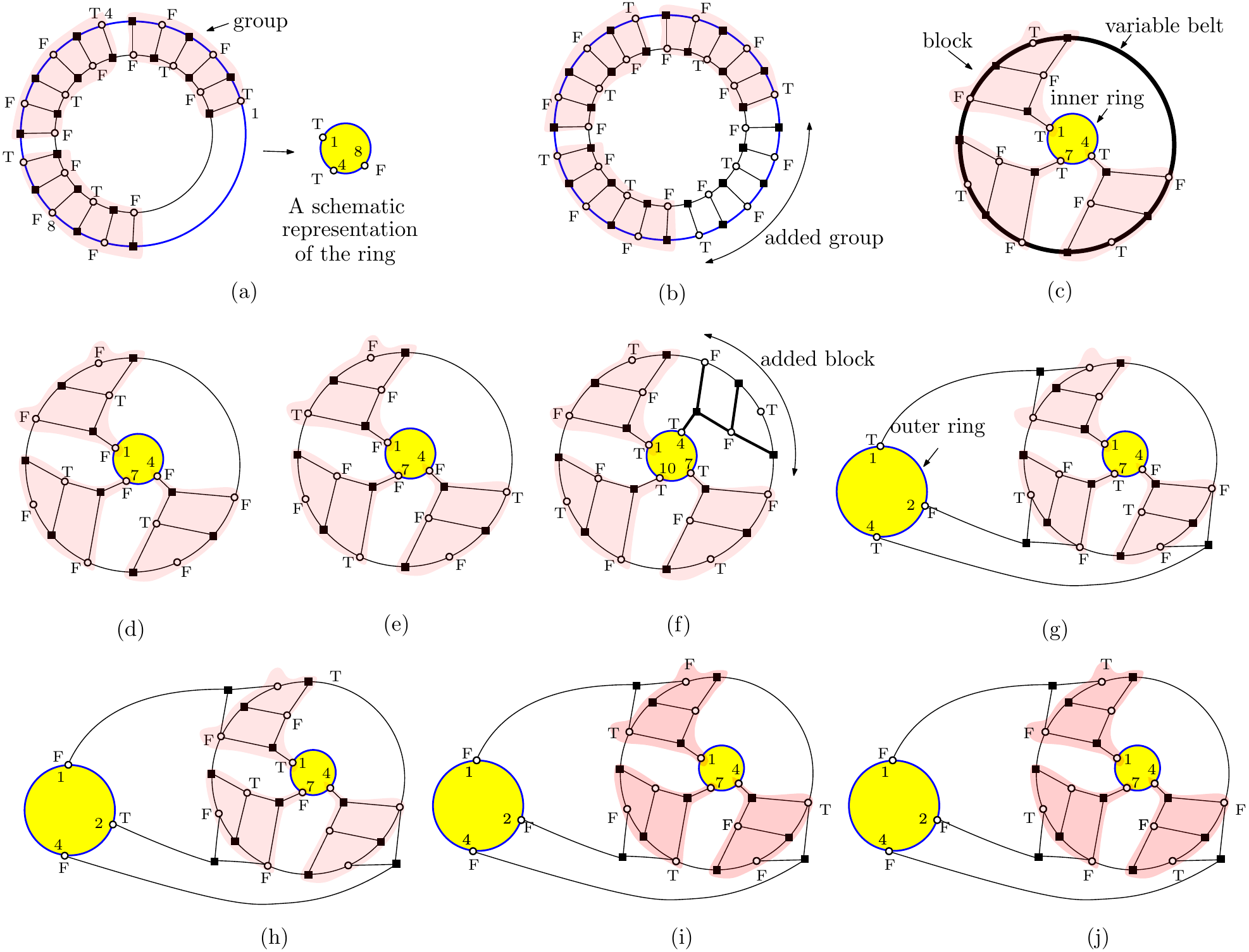}
\caption{(a) A ring with its schematic representation where the truth assignments of the variables at positions 1, 4 and 8 are shown.   The feasible truth value assignments are illustrated using T and F. The clause vertices are shown in squares. (b) An addition of a new group into a ring. (c)--(f) Illustration for variable gadgets without the outer rings. Only the outer cycle of the ring (in blue) is drawn for simplicity. (g)--(h) Variable gadgets  with   two infeasible truth value assignments. (h)--(j) Variable gadgets  with  two feasible truth value assignments.  }\label{fig:ring}
\end{figure}

\begin{remark}
\label{r1}
In every satisfiable truth assignment of a ring, the truth values on the outerface appear in the following  sequence $T,F,F,\ldots,T,F,F$.  
\end{remark}

A \emph{variable gadget of length $q$} is a planar 1-in-3-SAT graph with $q$ blocks. Here $q\ge 3$ is a positive integer and each block contains 3 clause vertices. Figure~\ref{fig:ring}(i) illustrates a variable gadget of length 3 where the three blocks are enclosed in three disjoint red shaded regions.  A variable gadget contains two or more rings: one inner ring, and one or more outer rings. A ring is  called an \emph{ outer ring} if it lies on the outer face. Otherwise, it is called an  \emph{inner ring}. The gadget in Figure~\ref{fig:ring}(i) contains two rings: one inner ring and one outer ring.  We  first describe the details of the gadget without the outer rings, as illustrated in Figure~\ref{fig:ring}(c).  We will refer to the  cycle passing through the blocks (shown in bold) as the \emph{variable band}.

One clause vertex from each block is adjacent to the inner ring.  The $i$th block, where $0\le i\le (q-1)$, is adjacent to the $(3i+1)$th variable vertex on the outerface of the inner ring. Figure~\ref{fig:ring}(c) illustrates the positions of the vertices of the inner ring where the  blocks connect to (i.e., shown in numeric labels). By Remark~\ref{r1}, these  $(3i+1)$th variables must all have the same truth value. For example, in Figure~\ref{fig:ring}(c)  they are all true, and in Figure~\ref{fig:ring}(d)--(e) they are all false. Hence, in a  satisfiable truth assignment, the truth values on the  variable band  are either all false (Figure~\ref{fig:ring}(d)), or appear in the  following  sequence $T,F,\ldots,T,F$ (Figures~\ref{fig:ring}(c) and (e)). 

An outer ring enforces the truth value sequence on the variable band to be $T,F,\ldots,T,F$. Therefore, with an outer ring, Figure~\ref{fig:ring}(d) becomes infeasible. A variable gadget of length $(q+1)$ can be constructed by adding a new  block to a variable gadget of length $q$, as shown in Figure~\ref{fig:ring}(f). Hence, we can attach outer rings as necessary.

There can be one or more outer rings exterior to the variable belt. Each outer ring is adjacent to 8 consecutive variable vertices on the belt. Figure~\ref{fig:ring}(g) illustrates the positions where the variable vertices on the outer ring form connections with the belt. By Remark~\ref{r1}, these three variable vertices on the outer ring must take a truth assignment from $\{(T,F,T),(F,T,F),(F,F,F)\}$. However, two of these assignments, i.e., $\{(T,F,T),(F,T,F)\}$ cannot be extended to a full satisfiable assignment. Figures~\ref{fig:ring}(g) and (h) illustrate this by showing how these truth assignments impose  both true and false values on the inner ring. The other truth assignment, i.e., $(F,F,F)$, enforces the  sequence $T,F,\ldots,T,F$ on the variable belt, as shown in Figures~\ref{fig:ring}(i)--(j). 

\begin{remark}
\label{r2}
In every satisfiable truth assignment of a variable gadget, the truth values on the variable belt in the following  sequence $T,F,\ldots,T,F$.  
\end{remark}

 We complete the construction of $G'$ by replacing each lamina with a variable gadget. Since every variable vertex in $G$ appears in at most 4 clauses, a lamina can have at most 4 incident edges. Figure~\ref{fig:vg} illustrates various scenarios while replacing a lamina of four variable vertices $a,b,c,d$. Since each of $a,b,c,d$ is  a positive literal (type P) or a negative literal (type N)  of the same  variable, we can have 16 different scenarios. These cases have been grouped into four configurations. While replacing a lamina  with a variable gadget, one must choose a configuration based on the literal types. Figures~\ref{fig:vg}(e)-(h) illustrate the first four scenarios of Figure~\ref{fig:vg}(b) in more detail.  For example, if $a,b,c,d$ are of type PPNP (e.g., see Figure~\ref{fig:vg}(e)), then we can choose the configuration of  Figure~\ref{fig:vg}(b).   If $a,b,c,d$ are of type PPPN (e.g., see Figure~\ref{fig:vg}(g)), then we choose the configuration of  Figure~\ref{fig:vg}(b) by   taking a mirror reflection and relabeling the variable vertices.  
 
\begin{figure}[h]
\centering 
\includegraphics[width=\textwidth]{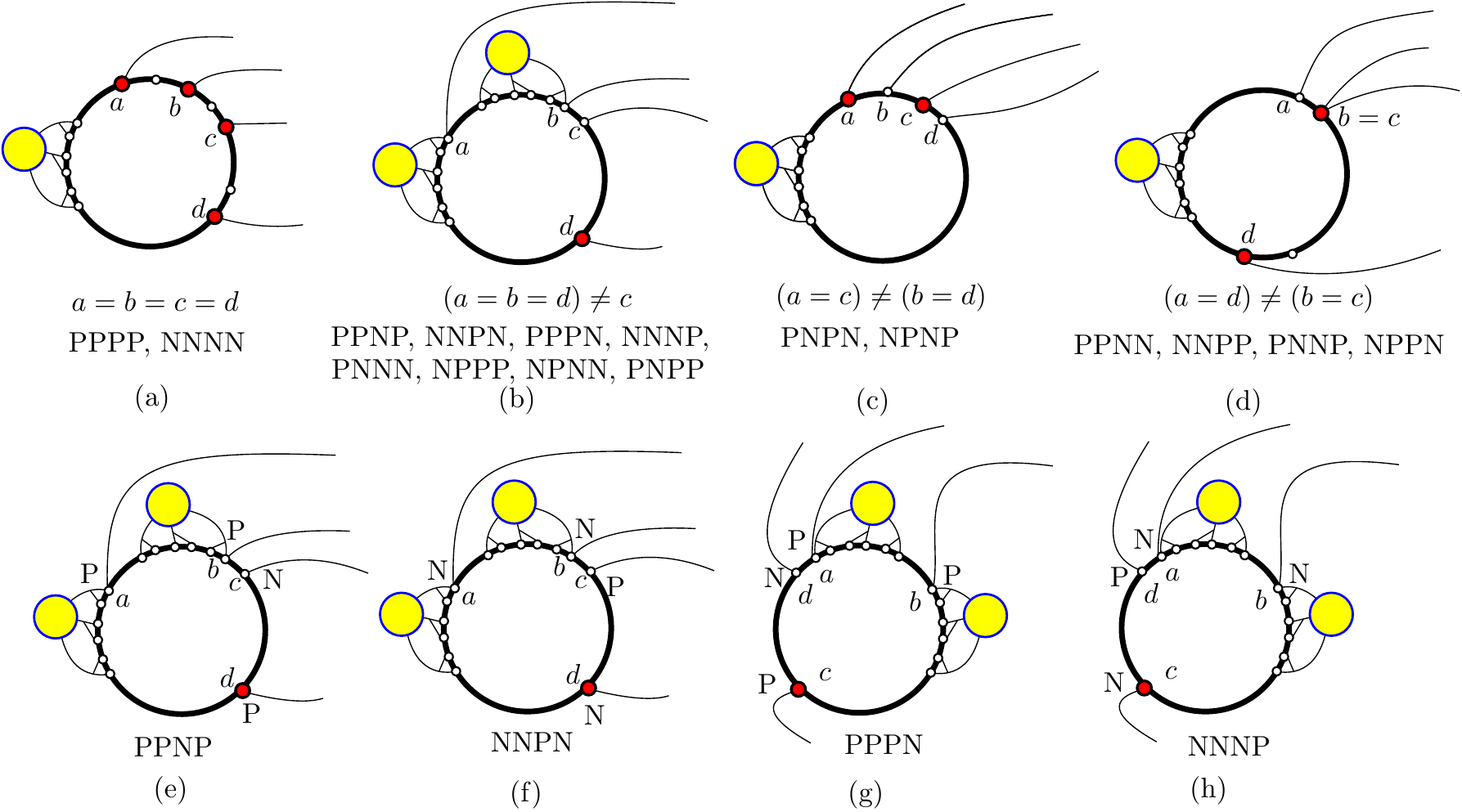}
\caption{Replacing each lamina with a variable gadget, where the degree two vertices on the variable belt are shown in red. (a)--(d) Four configurations based on the 16 scenarios of $a,b,c,d$. (e)--(h) Details for the first four scenarios of Figure~\ref{fig:vg}(b). }\label{fig:vg}
\end{figure}
 
Note that a variable belt can have vertices of degree two, which are shown in large red vertices. The configurations ensure  that the edges incident to the lamina are attached to these vertices. In addition, the degree of each variable vertex does not exceed 4. We now have the following lemma.  
\begin{figure}[h]
\centering 
\includegraphics[width=.75\textwidth]{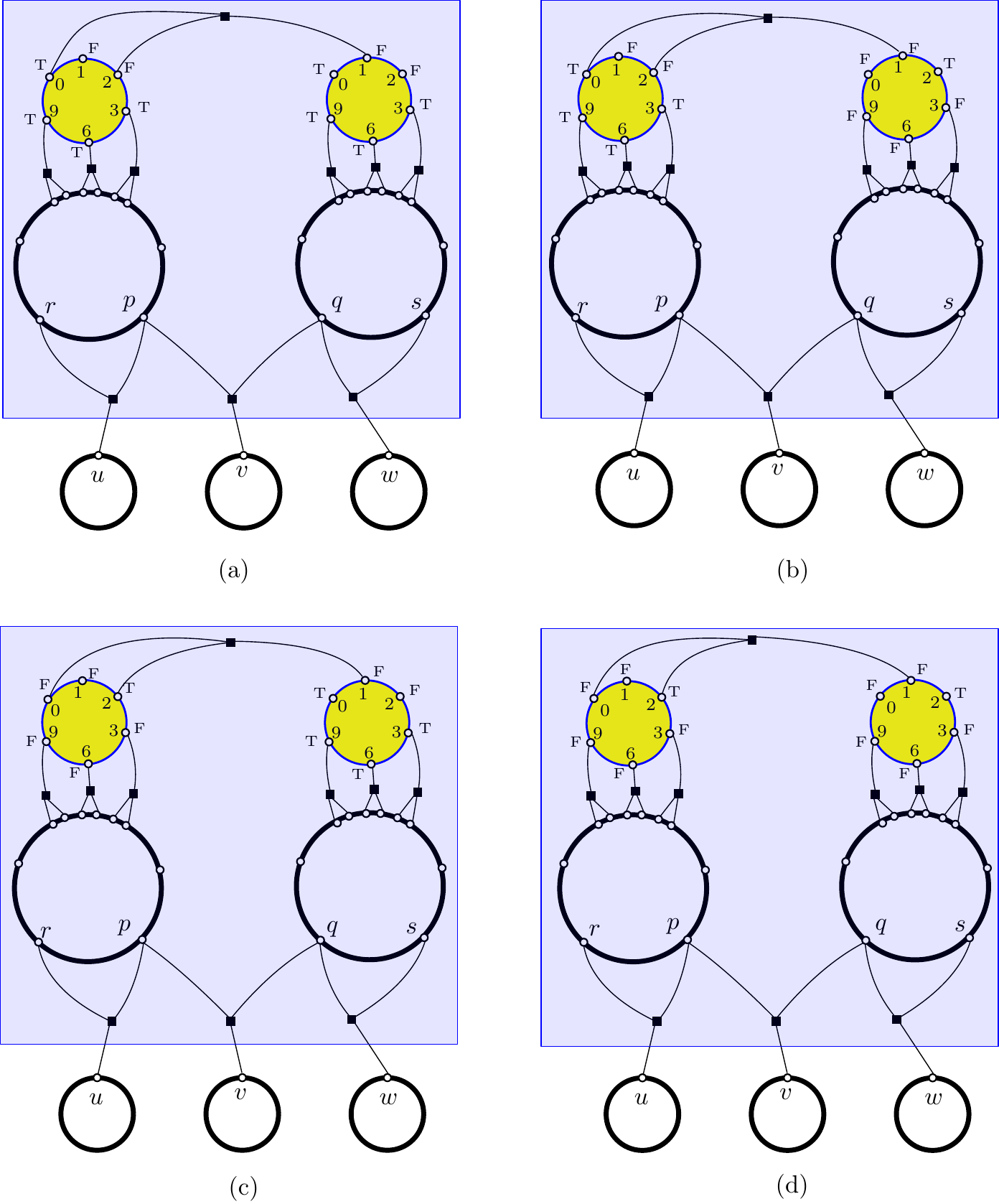}
\caption{Extension of the truth value assignments of $u,v$ and $w$. (a) $u$ and $w$ both obtain  true values. (b) $u$ obtains a true value and $w$ obtains a false value. (c) $u$ obtains a false value and $w$ obtains a true value. (d) $u$ and $w$ both obtain  false values. }\label{fig:aug}
\end{figure}
\begin{lemma}\label{l3}
Let $I$ and $I'$ be the 3-SAT and 1-in-3-SAT instances corresponding to $G$ and $G'$, respectively. Then 
$I$ is satisfiable if and only if $I'$ is satisfiable. 
\end{lemma}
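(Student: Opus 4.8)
\textbf{Proof plan for Lemma~\ref{l3}.}

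The plan is to prove the two directions of the equivalence separately, in each case showing how a satisfying assignment of one instance can be transformed into a satisfying assignment of the other, using the structural facts established about the gadgets (Remarks~\ref{r1} and~\ref{r2} and the clause-gadget property cited from~\cite{DBLP:journals/corr/abs-1802-09465}).

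First I would prove that if $I$ is satisfiable then $I'$ is satisfiable. Fix a satisfying truth assignment $\tau$ of $I$. For each variable $v$ of $G$, if $\tau(v)$ is true, set the truth values on the variable belt of the corresponding variable gadget to the sequence $T,F,\ldots,T,F$ (placing the $T$ at the belt vertices that feed into the clause gadgets); if $\tau(v)$ is false, do the mirror-image thing so that those same belt vertices receive $F$. By Remark~\ref{r2} this is the only shape a belt assignment can take, and I would verify that this shape can indeed be extended to a satisfying 1-in-3 assignment of the whole variable gadget (this is exactly what Figures~\ref{fig:ring}(i)--(j) assert, and the rings were built so that the $(F,F,F)$ outer-ring choice is compatible). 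The key point of the configurations in Figure~\ref{fig:vg} is that a positive literal occurrence (type P) of $v$ is attached to a belt vertex carrying $\tau(v)$, whereas a negative occurrence (type N) is attached to one carrying $\neg\tau(v)$; so the literal value delivered to each clause gadget equals the literal value under $\tau$. Now for each clause $c=(x\vee y\vee z)$ of $I$, since $\tau$ satisfies $c$, the cited property guarantees that the three gadget clauses $c_1,c_2,c_3$ admit a 1-in-3 assignment extending the already-fixed values of $\overline x, y, \overline z$; set the auxiliary variables $p,q,r,s$ accordingly. This yields a full 1-in-3 satisfying assignment of $I'$.

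For the converse, assume $I'$ is satisfiable and fix a satisfying 1-in-3 assignment $\tau'$. By Remark~\ref{r2}, for each variable gadget the belt values form the sequence $T,F,\ldots,T,F$ (or its reflection); define $\tau(v)$ to be the common truth value that $\tau'$ assigns to the belt vertices feeding the clause gadgets of $v$. Because of the configuration rules, a type-P occurrence of $v$ then receives $\tau(v)$ and a type-N occurrence receives $\neg\tau(v)$ under $\tau'$, so the literal $x$ (resp.\ $\overline z$) appearing in $c_1$ (resp.\ $c_3$) carries exactly the value it has under $\tau$. Restricting $\tau'$ to the clause gadget of $c=(x\vee y\vee z)$, the three clauses $c_1,c_2,c_3$ are simultaneously 1-in-3-satisfied, so by the cited property $\tau$ makes $c$ true. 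Hence $\tau$ satisfies $I$.

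The main obstacle I expect is the bookkeeping in the variable gadget: one has to be careful that (i) the belt sequence the clause gadgets see is consistent across all occurrences of a variable (this is what the inner ring plus Remark~\ref{r1}/\ref{r2} buy us), (ii) the degree-two belt vertices singled out in Figure~\ref{fig:vg} really are the ones the clause-gadget edges attach to, so that adding those edges keeps every variable's total occurrence count at most $4$ and does not destroy the gadget's feasibility analysis, and (iii) the P/N pattern of the four occurrences genuinely falls into one of the four configurations up to rotation, reflection and relabeling. Once these structural checks are in place, the clause-gadget equivalence is a black box from~\cite{DBLP:journals/corr/abs-1802-09465} and the argument closes routinely.
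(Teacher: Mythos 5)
Your proposal is correct and follows essentially the same route as the paper's proof: both directions rely on the clause-gadget equivalence cited from~\cite{DBLP:journals/corr/abs-1802-09465} together with the variable gadget forcing consistent belt values (Remark~\ref{r2}) so that positive and negative occurrences receive complementary, consistent truth values. The paper states this argument much more tersely, while you spell out the bookkeeping (belt orientation, P/N attachment points, extension to the auxiliary variables $p,q,r,s$), but the underlying idea is identical.
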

\begin{proof}
The clause gadget ensures that a clause in the  3-SAT instance is satisfiable if and only if each clause  in the clause gadget obtains exactly one true value. The variable gadgets ensure that in every  satisfying truth assignment  $I'$, the truth values of the  positive and negative literals have been set consistently. Since all clauses are satisfied in $I'$, the truth assignment also satisfies $I$. On the other hand, any satisfying truth assignment of $I$ determines a consistent set of truth values on the belt, and thus can be extended to a satisfying truth assignment for $I'$.    
\end{proof}

\subsection{Construction of $G''$}
The 1-in-3-SAT graph $G'$ is not 3-connected. We now add some more clauses and variables to $G$ to construct another 1-in-3-SAT graph $G''$ such that $G''$ is planar 3-connected with the degree of variable vertices bounded by 4. 

Let $r,p,q,s$ be the variables in  a clause gadget of $G'$, e.g., see Figure~\ref{fig:one-in-three}(c). To construct $G''$, we augment the clause gadgets with two modified  vertex gadgets. Note that an outer ring in a variable gadget enforces that the variable belt must obtain an alternating  sequence of true and false values. A \emph{modified vertex gadget} attaches the outer ring to the variable belt at some   positions, as illustrated in  Figure~\ref{fig:aug}(a),  such that the outer ring no longer enforces the variable belt to contain both true and false values. The vertices $r,p$ (similarly, $q,s$) lie on two consecutive vertices on a variable belt. We ensure that $p$ and $q$ are vertices of degree four (i.e., they are degree-two vertices on the variable belt).

\begin{lemma}\label{l4}
Let $I'$ and $I''$ be the 1-in-3-SAT instances  corresponding to $G'$ and $G''$, respectively. Then $I'$ is satisfiable if and only if $I''$ is satisfiable.
\end{lemma}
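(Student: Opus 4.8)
The plan is to show that the transformation from $G'$ to $G''$ preserves satisfiability by a local analysis of the modified vertex gadgets. The key observation, already established in the construction, is that $G''$ differs from $G'$ only by replacing the ordinary outer rings attached to clause-gadget variable belts with \emph{modified} vertex gadgets, whose outer rings are attached at shifted positions (Figure~\ref{fig:aug}). So the proof reduces to understanding, for each modified vertex gadget, which truth-value patterns on the four distinguished variables $r,p,q,s$ of a clause gadget are \emph{extendable} to a full satisfying assignment of the gadget, and checking that this set of extendable patterns coincides with the set of patterns that were already feasible in $G'$.

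First I would recall from Remark~\ref{r1} that the outerface of any ring forces the repeating pattern $T,F,F,\ldots,T,F,F$, so an outer ring attached to eight consecutive belt vertices imposes a constraint from $\{(T,F,T),(F,T,F),(F,F,F)\}$ on the three belt vertices it touches (exactly as in the analysis of Figure~\ref{fig:ring}(g)--(j)). In an ordinary variable gadget, only $(F,F,F)$ survives, which is what forces the alternating $T,F,\ldots,T,F$ pattern on the belt (Remark~\ref{r2}). The point of the modified vertex gadget is that, because the outer ring is attached at the positions shown in Figure~\ref{fig:aug}(a), all four combinations of truth values on the pair $(u,w)$ — corresponding to the relevant belt vertices near $r,p$ and $q,s$ — remain extendable; this is exactly what Figures~\ref{fig:aug}(a)--(d) are depicting. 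I would make this precise by, for each of the four cases, exhibiting the explicit truth assignment to the dummy clause and variable vertices of the modified gadget that satisfies every clause in the gadget (one-in-three), and conversely arguing that any satisfying assignment of the gadget must restrict to one of these four patterns on $(u,w)$ by tracing the forced values inward from the outer ring.

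The forward and backward directions of the biconditional then follow the same template as Lemma~\ref{l3}. Given a satisfying assignment of $I'$, the truth values on all the original belts of $G'$ already satisfy the clause gadgets; since the modified vertex gadgets can be satisfied for \emph{every} pattern on their distinguished vertices, we can extend the assignment to the new dummy vertices of $G''$, and no clause of $I'$ is affected because we add clauses and variables but do not remove or alter existing ones. Conversely, a satisfying assignment of $I''$ restricted to the variables of $G'$ satisfies all clauses of $I'$ directly — here we use that $p$ and $q$ were chosen to be degree-four vertices (degree-two on the belt) so that the new gadget edges attach cleanly without disturbing the clause-gadget clauses, and that the modified gadget imposes no new constraint on $r,p,q,s$ beyond what the clause gadget already allows.

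The main obstacle I anticipate is the case analysis verifying that the modified vertex gadget is satisfiable under \emph{all four} $(u,w)$-patterns while still being a valid planar one-in-three instance with variable degrees bounded by $4$ — in particular, checking that the shifted attachment of the outer ring in Figure~\ref{fig:aug} genuinely decouples the two halves of the belt (so that $u$ and $w$ become independent) rather than accidentally re-imposing an alternation or an inconsistency. This is a finite but somewhat delicate propagation argument through the ring's group structure, and it is the step where one must be careful that the figure accurately reflects a realizable gadget; everything else is bookkeeping analogous to the proof of Lemma~\ref{l3}.
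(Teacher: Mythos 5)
Your proposal is correct and follows essentially the same route as the paper: the direction from $I''$ to $I'$ is immediate because the clauses of $I'$ are contained in $I''$, and the direction from $I'$ to $I''$ is handled by a case analysis on the truth values of $u$ and $w$ in each clause gadget, extending the assignment through the modified vertex gadgets exactly as depicted in Figures~\ref{fig:aug}(a)--(d). The only difference is presentational: the paper organizes the extension into three explicit cases (both true, exactly one true, both false) and reads the gadget assignments off the figures, while you frame it as verifying extendability of all four $(u,w)$ patterns; your extra step of showing every satisfying gadget assignment restricts to one of these patterns is not needed for the lemma.
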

\begin{proof}
Consider first the case when $I''$ admits a satisfying truth assignment $\phi''$. Since the clauses of $I'$ are included in $I''$, $\phi''$ will set exactly one variable to true in each clause of $I'$. Therefore, $I'$ will be satisfied. 

Assume now that $I'$ admits a satisfying truth assignment $\phi'$. We now show how the truth assignment can be extended to satisfy all the clauses of $I''$. Let $(u\vee p \vee r) \wedge (v\vee p \vee q) \wedge (w\vee q \vee s)$ be three clauses in $I'$ that belong to a clause gadget of $G'$. We now consider the following cases depending on the truth values of $u$ and $w$ in $\phi'$.

\textbf{Case 1 ($u$ and $w$ both obtain true values)}. In this case $r,p,q,s$ must obtain false values. Since $I'$ is satisfied, $v$ must be true. Since the outer ring here no longer enforces the truth value sequence on the variable band to be $T,F,\ldots,T,F$, we can use a truth value sequence $F,F,\ldots, F,F$ for the variable belt.  

The variable belt is connected to the outer ring at the third, sixth and ninth variable vertices on the ring. By the property of the ring (Remark~\ref{r1}), in any satisfying truth  assignment, they must all have the same truth value. Thus the outer rings obtain true values at positions 3, 6 and 9. This truth value assignment is consistent also  for the clause connecting the two rings. Hence we can satisfy all clauses with exactly one true value per clause, as illustrated in Figure~\ref{fig:aug}(a).  

\textbf{Case 2 (exactly one of $u$ and $w$ obtains a true value)}. Without loss of generality assume that $u$ obtains a true value and $w$ obtains a false value (Figure~\ref{fig:aug}(b)). The other case is similar, as shown in Figure~\ref{fig:aug}(c). Since $u$ obtains a true value, $r,p$ must obtain false values and this enforces false values on the corresponding ring. Since $W$ obtains a false value, exactly one of $q,s$ obtains a true value. We choose a feasible truth value assignment based on the value of $v$ in $\phi'$. Hence the corresponding ring obtains false values. We can extend this assignment to also satisfy the clause connecting the two rings. Hence we can satisfy all clauses with exactly one true value per clause.
 
\textbf{Case 3 ($u$ and $w$ both obtain false values)}.
In this case one of $p,r$ obtains a  true  value and the other variable obtains a false value. Similarly, one of $q,s$ obtains a  true  value and the other variable obtains  a false value.  We choose a feasible truth value assignment based on the value of $v$ in $\phi'$. In this scenario, the outer rings obtain false values at positions 3, 6 and 9. This truth value assignment is consistent also  for the clause connecting the two rings. Hence we can satisfy all clauses with exactly one true value per clause, as shown in Figure~\ref{fig:aug}(d). 
\end{proof}

\begin{lemma}\label{property}
The  1-in-3-SAT graph $G''$ is 3-connected with the degree of each variable vertex bounded by 4.
\end{lemma}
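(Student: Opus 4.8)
The plan is to establish the two claimed properties of $G''$ separately: first the degree bound on variable vertices, then 3-connectivity. The degree bound is the easy part and should be dispatched quickly. Each variable vertex of $G''$ is a variable vertex introduced either inside a clause gadget (the variables $p,q,r,s$ and the internal variables of the rings) or on a variable belt or lamina replacement. By construction, the clause gadget variables $p,q,r,s$ have bounded degree by inspection of Figure~\ref{fig:one-in-three}(c), and the ring construction (Figure~\ref{fig:ring}) was designed so that every variable vertex inside a ring lies in at most 4 clauses. The only place where degree could blow up is where laminae attach to variable belts and where the modified vertex gadget attaches an outer ring to a belt; here I would invoke the explicit remark already made in the text that "the degree of each variable vertex does not exceed 4" in the configurations of Figure~\ref{fig:vg}, and that $p,q$ were deliberately chosen to be the degree-two belt vertices so that adding the modified vertex gadget edges brings them to degree exactly 4. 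So the degree bound follows by a case check over the finitely many gadget types.

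For 3-connectivity, the strategy is to apply Lemma~\ref{3con} repeatedly, building $G''$ from a 3-connected "skeleton" by substituting 2-connected, internally 3-connected gadgets for vertices. Concretely, I would argue that (i) the original 3-SAT graph $G$ is 3-connected by hypothesis; (ii) each lamina $L_v$, once a variable vertex $v$ is replaced by its cycle with the neighbor-edges reattached, is exactly the "replace a vertex by a cycle with $\ge 3$ distinct attachment points" operation, which preserves 3-connectivity (cited as~\cite{mastersthesis} in the proof of Lemma~\ref{3con}); (iii) each clause gadget, each ring, and each full variable gadget is 2-connected and internally 3-connected as a plane graph, with its outer vertices of degree two being precisely the ones that receive the reattached edges — this is where Lemma~\ref{3con} does the work, since it guarantees that substituting such a gadget for a vertex of degree $d$ and reattaching the $d$ edges to at least three outer vertices (with every degree-two outer vertex getting a new edge) keeps the whole graph 3-connected. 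Applying this substitution lemma once per variable and once per clause, and then once more for each modified vertex gadget attachment in the $G'\to G''$ step, yields 3-connectivity of $G''$ by induction on the number of substitutions.

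The main obstacle — and the step deserving the most care — is verifying the hypotheses of Lemma~\ref{3con} for the variable gadget, namely that it is \emph{2-connected and internally 3-connected} as a plane graph, and that it has \emph{at most $d\le 4$ outer vertices of degree two} available for reattachment. Internal 3-connectivity here amounts to showing that every inner vertex of the gadget (every vertex of the inner ring, the variable belt, and any inner rings) has three vertex-disjoint paths to the outer face; this needs a structural argument about how the rings, the variable band, and the connecting clauses interlock, using the fact that each ring is itself internally 3-connected and that consecutive rings/bands share enough edges. I would isolate this as a sub-claim ("every ring is 2-connected and internally 3-connected, and attaching a ring to a band via the prescribed connections preserves these properties") and prove it by the same kind of disjoint-paths bookkeeping used in Lemma~\ref{3con}'s proof. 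A secondary subtlety is the $G'\to G''$ step: the modified vertex gadget reattaches an outer ring to the variable belt at positions chosen so the belt need not alternate, and one must check that these reattachments still touch at least three distinct outer vertices of the ring and cover all its degree-two outer vertices, so that Lemma~\ref{3con} still applies. Once these gadget-level properties are in hand, the global conclusion is a routine induction, and together with the degree-bound case check this completes the proof.
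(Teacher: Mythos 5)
Your proposal is correct and follows essentially the same route as the paper's proof: repeated application of Lemma~\ref{3con} to the lamina, variable-gadget, and clause-gadget (modified vertex gadget) substitutions, with the degree bound checked by inspection of the gadget constructions. The only difference is that you explicitly flag the need to verify that each gadget is 2-connected and internally 3-connected, a fact the paper simply asserts, so your outline is, if anything, slightly more careful on that point.
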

\begin{proof}
The input SAT graph $G$ is 3-connected, and by Lemma~\ref{3con}, it remains 3-connected after we replace each variable vertices with a lamina. Each variable gadget is 2-connected and internally 3-connected. Hence by Lemma~\ref{3con}, replacing each lamina with a variable gadget (e.g., Figure~\ref{fig:vg}) keeps the graph 3-connected. Each clause gadget consists of two modified vertex gadgets, e.g., see the subgraph interior the rectangular region in Figure~\ref{fig:aug}(a). Such clause gadgets are 2-connected and internally 3-connected. Hence  by Lemma~\ref{3con}, replacing the clauses with clause gadget keeps the resulting graph $G''$ 3-connected. By construction, the vertices of each variable and clause  gadget are of maximum degree 4. Hence the degree of the vertices of $G''$ are also bounded by 4. 
\end{proof}

\begin{theorem}
\label{th:hard}
Planar 3-connected 1-in-3-SAT is NP-hard even when every variable appears in at most 4 clauses.
\end{theorem}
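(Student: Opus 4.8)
The plan is to assemble the gadget constructions and lemmas developed above into a single polynomial-time many-one reduction. I would start from the problem of deciding satisfiability of a planar $3$-connected $3$-SAT instance $I$ in which every variable occurs in at most $4$ clauses; this problem is NP-hard by Kratochv\'il~\cite{DBLP:journals/dam/Kratochvil94}, and its SAT graph $G$ is planar and $3$-connected. The reduction is exactly the two-phase construction already described: first build $G'$ (and the positive planar $1$-in-$3$-SAT instance $I'$) by replacing each variable vertex with a lamina, each clause vertex with a clause gadget, and then each lamina with a variable gadget chosen according to the P/N pattern of its incident literals; then build $G''$ (and the instance $I''$) by augmenting every clause gadget with the two modified vertex gadgets.

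Correctness would follow by chaining Lemmas~\ref{l3} and~\ref{l4}: $I$ is satisfiable if and only if $I'$ is satisfiable if and only if $I''$ is satisfiable. The structural requirements of the theorem are precisely Lemma~\ref{property}: $G''$ is planar, $3$-connected, and every variable vertex has degree at most $4$, so the $1$-in-$3$-SAT instance $I''$ has each variable appearing in at most $4$ clauses. It then remains only to observe that the reduction runs in polynomial time. For this I would bound the size of each gadget: each lamina has size equal to the (bounded) degree of the replaced variable; each clause gadget replaces one clause by a constant number of clauses; and each variable gadget consists of an inner ring plus a constant number of outer rings, each a $k$-ring whose size is linear in the length of the variable belt, which in turn is proportional to the number of clauses the variable appears in, hence bounded by the degree of the lamina. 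Therefore $|G''| = O(|G|)$, and $G''$ together with $I''$ can be produced in time polynomial (indeed linear) in the size of $I$.

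Putting these together: given $I$, construct $I''$ in polynomial time; then $I$ is satisfiable if and only if $I''$ is satisfiable, and $I''$ is a positive planar $1$-in-$3$-SAT instance whose SAT graph is $3$-connected and in which every variable appears in at most $4$ clauses. Hence deciding satisfiability of such instances is NP-hard.

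I expect the substantive difficulty to be entirely contained in the lemmas already established --- in particular in verifying that the ring and variable gadgets force the alternating $T,F,\dots,T,F$ pattern (Remarks~\ref{r1} and~\ref{r2}) and that the modified vertex gadgets relax this pattern just enough for the clause-gadget analysis of Lemma~\ref{l4} to go through. Given those, the proof of the theorem itself is a short bookkeeping argument, and the only mild care needed is to confirm that the gadget sizes stay bounded, so that the reduction is genuinely polynomial and the degree-$4$ bound is preserved simultaneously with $3$-connectivity.
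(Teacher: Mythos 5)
Your proposal is correct and follows essentially the same route as the paper: reduce from planar $3$-connected $3$-SAT with variable occurrence at most $4$, apply the two-phase gadget construction, and chain Lemmas~\ref{l3}, \ref{l4}, and~\ref{property} to conclude. Your added remark bounding the gadget sizes to confirm the reduction is polynomial is a harmless (and sensible) supplement to what the paper leaves implicit.
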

\begin{proof}
Let $I$ be a planar 3-connected 3-SAT instance and let $G$ be the corresponding SAT graph. Let $G'$ be the  graph  obtained by modifying $G$ using the vertex and clause gadgets, and let $G''$ be the 1-in-3SAT graph obtained from $G'$. Let $I'$ and $I''$ be the 1-in-3-SAT
 instances corresponding to $G'$ and $G''$, respectively. By Lemma~\ref{l3} and Lemma~\ref{l4}, $I$ is satisfiable if and only if $I''$ is satisfiable.  By Lemma~\ref{property}, $G''$ is a 3-connected  graph with the degree of its variable vertices bounded by four. 
     
\end{proof}

\section{Positive Planar $3$-Connected $1$-in-$3$-SAT with Even Variable Frequency}
\label{sec:even}
In this section we prove that every positive planar 3-connected 1-in-3-SAT with each variable appearing in an even number of clauses is always satisfiable and a satisfying truth assignment can be computed in quadratic time. 

\begin{figure}[h]
\centering 
\includegraphics[width=.75\textwidth]{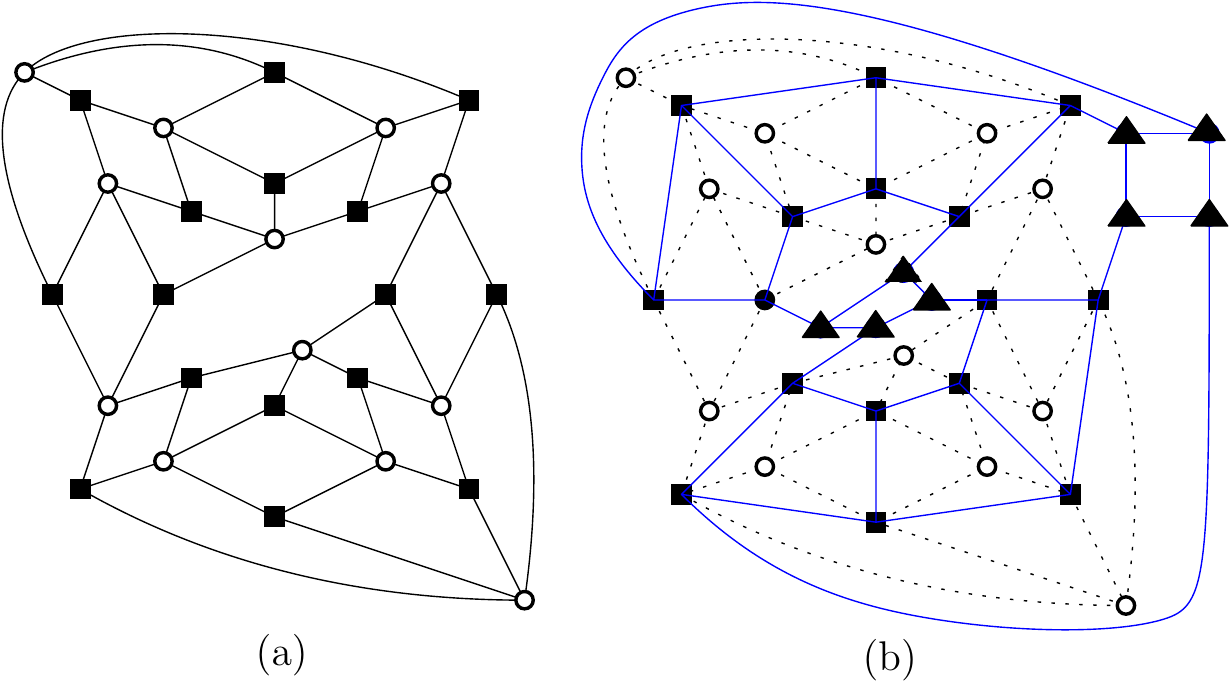}
\caption{(a) An instance of a positive planar 3-connected $1$-in $3$-SAT, and (b) the corresponding saturated clause graph shown in solid blue.}\label{fig:one-in-threex}
\end{figure}

Let $R$ be an arbitrary positive planar $3$-connected $1$-in-$3$-SAT expression and let $G$ be its corresponding SAT graph. Figure~\ref{fig:one-in-threex}(a) illustrates such an instance where the clause and variable vertices are shown in squares and circles, respectively. 

We first construct a saturated clause graph $C_s$, as illustrated in  Figure~\ref{fig:one-in-threex}(b). The vertices of the added cycle are shown in triangles. We have shown in the proof of Theorem~\ref{th:nae3sat} that $C_s$ is a bridgeless cubic graph. However, here we need to show that $C_s$ is 3-connected. To observe this we can think of the construction of $C_s$ in three steps where each step remains the graph 3-connected, as follows.

\begin{enumerate}[]
    \item Step 1: For each quadrangular face of the clause graph $C$, add an edge between the clause vertices. Let $C'$ be the resulting graph. Since $C$ is 3-connected and $C'$ is obtained by adding some edges to $C$, $C'$ is 3-connected. 
    \item Step 2: Apply the saturation operation to all the faces of length more than four. Let $C''$ be the resulting graph. By Lemma~\ref{3con}, $C''$ is 3-connected.
    \item Step 3: Delete the variable vertices from $C''$ to obtain $C_s$. If the neighbors of a vertex $v$ in a 3-connected graph induces a cycle, then $v$  can be deleted to obtain another 3-connected graph~\cite{Tutte}. Using this property one can show that $C_s$ is 3-connected.
\end{enumerate}



Since $C_s$ is a planar 3-connected   cubic graph, the dual of $C_s^*$ is a triangulated planar graph. Since every variable appears in an even number of clauses, each vertex of $C_s^*$ has an even degree. Every triangulated planar graph where each vertex has an even degree admits a 3 vertex coloring~\cite{tsai2011new}. Such a planar  triangulation is known as an \emph{even triangulation}.   Therefore, we can color the faces of $C_s $ with 3 colors $c_1,c_2,c_3$ such that no two adjacent faces receive the same color.  Figure~\ref{fig:one-in-three2}(a) illustrates such a coloring of the faces.  We take the  faces that are colored with $c_1$ and set the variables corresponding to those faces to true. Figure~\ref{fig:one-in-three2}(b) illustrates these variables in   filled circles (orange).  Finally, we set the remaining variables to false.  

We now show that the resulting truth assignment satisfies the 1-in 3-SAT instance $R$. Since the SAT graph $G$ is 3-connected, every clause is adjacent to exactly three variables. Therefore, the three faces around a clause vertex must have 3 different colors. One of these is colored with $c_1$ and the remaining two must be colored with $c_2$ and $c_3$. Therefore, exactly one variable associated to that clause is set to true. 

The construction of $C_s$ and the dual graph takes linear time. A 3-coloring of an even triangulation can be computed using a concept of `non-crossing Eulerian circuit'~\cite{tsai2011new}, which is straightforward to compute in quadratic    time. Hence in quadratic time, one can find a satisfying truth assignment for $R$.

\begin{figure}[h]
\centering 
\includegraphics[width=.7\textwidth]{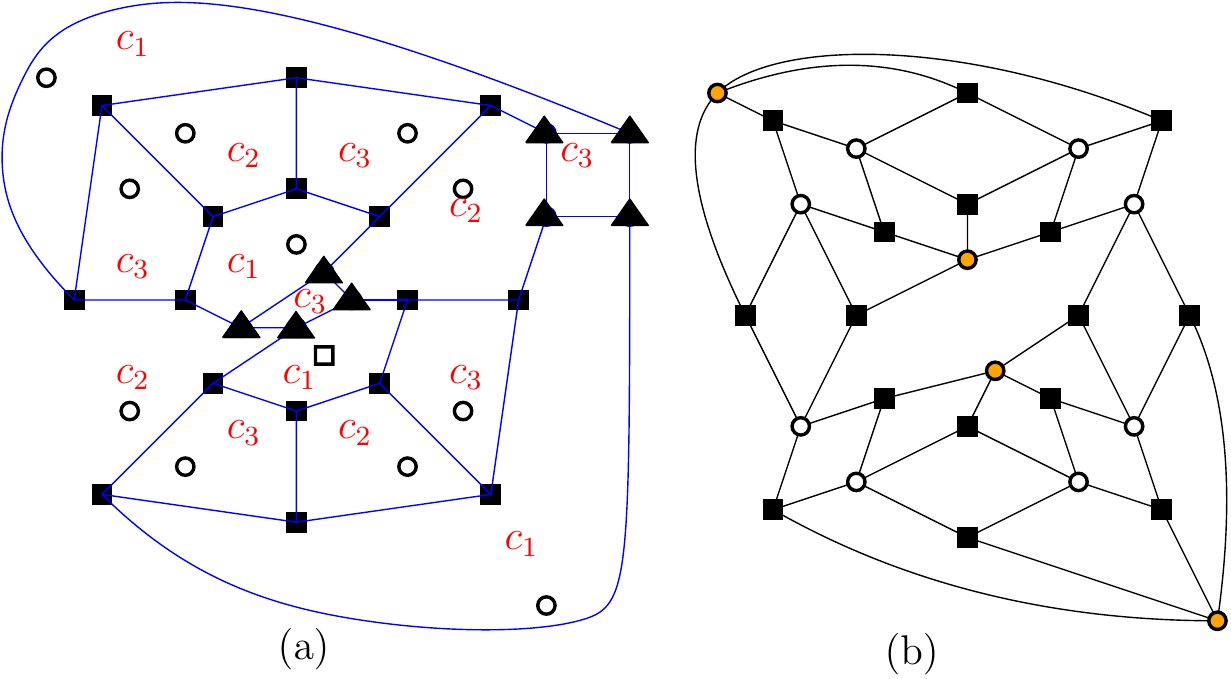}
\caption{Construction of a satisfying truth assignment from a face coloring of $C_s$. }\label{fig:one-in-three2}
\end{figure}
The following theorem summarizes the results of this section.

\begin{theorem}
\label{th:pln}
Let $R$ be an  arbitrary positive planar $3$-connected 1-in-$3$-SAT expression. If every variable appears in an even number clauses, then $R$ is always satisfiable and a satisfiable assignment of $R$ can be computed in quadratic time. 
\end{theorem}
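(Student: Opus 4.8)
The plan is to proceed exactly along the lines already sketched in the narrative immediately preceding the theorem statement, filling in the structural details that make each of the three construction steps rigorous, and then arguing correctness of the truth assignment derived from the face $3$-coloring of $C_s$. First I would fix a planar embedding of the SAT graph $G$ (unique up to the outer face, since $G$ is $3$-connected) and form the clause graph $C$, then the saturated clause graph $C_s$ by adding an edge between the two clause vertices of each quadrangular face and applying the saturation operation to every face of length more than four. By the argument in the proof of Theorem~\ref{th:nae3sat}, $C_s$ is a planar bridgeless cubic graph; what is new here is $3$-connectivity, which I would establish through the three-step decomposition: (i) $C'$, obtained from $C$ by adding edges, is $3$-connected because $C$ is (here one must first argue that $C$ itself is $3$-connected, which follows from $G$ being $3$-connected and the fact that contracting the variable vertices of a $3$-connected quadrangulation-like graph preserves $3$-connectivity, as in Lemma~\ref{3con} and~\cite{mastersthesis}); (ii) $C''$, obtained by the saturation operation, is $3$-connected by Lemma~\ref{3con}, since the added cycle together with its attachment pattern is a $2$-connected, internally $3$-connected plane graph with the degree-two outer vertices receiving the new edges; and (iii) deleting each variable vertex from $C''$ preserves $3$-connectivity by Tutte's property~\cite{Tutte} that a vertex whose neighborhood induces a cycle may be deleted from a $3$-connected graph while keeping it $3$-connected — here I would verify that in $C''$ each variable vertex indeed has its three (clause-or-dummy) neighbors forming a triangle, which is precisely what the edge-additions and saturation edges were arranged to guarantee.

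Next I would pass to the planar dual $C_s^*$. Since $C_s$ is a $3$-connected planar cubic graph, $C_s^*$ is a planar triangulation. The key number-theoretic observation is that the degree of a vertex of $C_s^*$ equals the length of the corresponding face of $C_s$, and each face of $C_s$ either is a quadrangular face of $G$ contributing a triangle after the added edge, or arises from the saturation of a longer face; in both cases the face of $C_s$ corresponds bijectively to a variable vertex of $G$, and the number of clause vertices on that face equals the number of clauses the variable appears in. Hence the hypothesis that every variable occurs in an even number of clauses forces every vertex of $C_s^*$ to have even degree, so $C_s^*$ is an even triangulation. By the theorem of Tsai and West~\cite{tsai2011new} (originally Heawood), an even planar triangulation is properly $3$-colorable, which dually gives a proper $3$-coloring $c_1,c_2,c_3$ of the faces of $C_s$. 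I would then set every variable whose face is colored $c_1$ to true and all others to false. To verify this satisfies the $1$-in-$3$ condition: each clause vertex of $G$ has exactly three incident faces in $C_s$ (it is cubic), they are pairwise adjacent in $C_s$ hence receive three distinct colors, so exactly one of the three corresponding variables is set to true — and I must check that the three faces around a clause vertex really do correspond to its three distinct variables, which again follows from the construction, whether the clause vertex lies on quadrangular faces or on saturated faces. Finally, for the running time, the embedding, $C$, $C_s$, and the dual are all computable in linear time, and the $3$-coloring of an even triangulation can be obtained in quadratic time via the non-crossing Eulerian circuit construction of~\cite{tsai2011new}, giving the claimed quadratic bound.

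The main obstacle I expect is step (iii) — showing that deleting the variable vertices from $C''$ preserves $3$-connectivity. Tutte's deletion criterion applies cleanly only when the neighborhood of the deleted vertex induces exactly a cycle (here a triangle), and one must confirm that after Steps~1 and~2 every variable vertex in $C''$ has precisely this property, simultaneously for variable vertices sitting on quadrangular faces and for those on saturated faces, and that the deletions can be performed one at a time with the invariant maintained throughout. A secondary subtlety is bookkeeping the bijection between faces of $C_s$ and variables of $G$ carefully enough that the ``exactly one true per clause'' argument and the ``even degree in $C_s^*$'' argument both go through uniformly across the two face types; I would handle this by recording, for each face of $C_s$, which variable vertex of $G$ it came from and arguing the incidence counts are preserved by each construction step.
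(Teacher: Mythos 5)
Your proposal follows essentially the same route as the paper: the same saturated clause graph $C_s$, the same three-step 3-connectivity argument (edge additions, Lemma~\ref{3con} for the saturation step, Tutte-style deletion of the variable vertices), the same passage to the dual even triangulation and its proper 3-coloring, and the same truth assignment, correctness check at each clause vertex, and quadratic-time analysis via the non-crossing Eulerian circuit. The step you single out as the main obstacle --- verifying that each variable vertex of $C''$ can be deleted while preserving 3-connectivity (note its neighborhood need not be a triangle, since a variable may appear in more than three clauses and saturation joins consecutive clause neighbors by paths through dummy vertices rather than by edges) --- is precisely the point the paper itself leaves at the level of a one-line claim, so your plan matches the published argument in both substance and level of detail.
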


\section{Conclusions}
\label{sec:con}

We have shown that positive planar $3$-connected NAE $3$-SAT is always satisfiable and a satisfiable assignment can be obtained in linear time. We have also shown that deciding whether a positive planar NAE $3$-SAT is satisfiable or not is as hard as finding a spanning 2-matching in a cubic graph. Hence we pose the following open problems.

\bigskip\noindent
\textbf{Open Problem 1.} Does there exist a  linear-time algorithm for finding a spanning 2-matching in a cubic (or, subcubic) planar graph?
\bigskip

 It would be interesting to examine whether a linear-time algorithm can be obtained for the  general case. In fact, we are not aware of any  $o(n^{3/2})$-time algorithm for the general case.\bigskip
 
\noindent
\textbf{Open Problem 2.} Does there exist  an $o(n^{3/2})$-time algorithm for solving positive planar NAE 3-SAT? 
\bigskip


One can also try to find out other tractable versions for $3$-SATs based on connectivity constraints on the SAT graph. We have shown that planar 3-connected 1-in-3-SAT remains NP-hard when every variable appears in at most 4 clauses.  Thus the following is another intriguing question. 
\bigskip

\noindent
\textbf{Open Problem 3.} Does  planar 3-connected 1-in-3-SAT remain NP-hard when every variable appears in  3 clauses? 
\bigskip

 We have shown that if every variable appears in an even number clauses, then a positive planar $3$-connected 1-in-$3$-SAT is always satisfiable and a satisfiable assignment can be computed in quadratic time. The time complexity relies on computing a  3-coloring of an even triangulation. Hence it would be interesting to examine whether there exist faster algorithms to compute a 3-coloring for an even triangulation.

\subsubsection*{Acknowledgement}

The work of D. Mondal is supported by the Natural Sciences and Engineering Research Council of Canada (NSERC).

\bibliographystyle{abbrv}
\bibliography{bibl} 

\begin{thebibliography}{10}

\bibitem{DBLP:conf/icalp/Biedl14}
T.~C. Biedl.
\newblock On area-optimal planar graph drawings.
\newblock In {\em Proceedings of the 41st International Colloquium on Automata,
  Languages, and Programming (ICALP)}, volume 8572 of {\em LNCS}, pages
  198--210. Springer, 2014.

\bibitem{Biedl01}
T.~C. Biedl, P.~Bose, E.~D. Demaine, and A.~Lubiw.
\newblock Efficient algorithms for {P}etersen’s matching theorem.
\newblock {\em J. Algorithms}, 38(1):110--134, 2001.

\bibitem{Cook71}
S.~Cook.
\newblock The complexity of theorem-proving procedures.
\newblock In {\em Proceedings of Third Annual ACM Symposium on Theory of
  Computing STOC(71), Shaker Heights, Ohio, USA}, pages 151--158. ACM, 1971.

\bibitem{DBLP:conf/walcom/DurocherM12}
S.~Durocher and D.~Mondal.
\newblock On the hardness of point-set embeddability.
\newblock In {\em Proceedings of the 6th International Workshop on Algorithms
  and Computation (WALCOM)}, volume 7157 of {\em LNCS}, pages 148--159.
  Springer, 2012.

\bibitem{Filho19}
I.~T. F.~A. Filho.
\newblock {\em Characterizing Boolean Satisfiability Variants}.
\newblock A Master's Thesis at Massachusetts Institute of Technology (MIT),
  2019.

\bibitem{GJ79}
M.~Garey and S.~Johnson.
\newblock {\em Computers and Intractability. A Guide to the Theory of
  NP-Completeness}.
\newblock W. H. Freeman and Company, New York, 1979.

\bibitem{DBLP:journals/siamcomp/GareyJT76}
M.~R. Garey, D.~S. Johnson, and R.~E. Tarjan.
\newblock The planar hamiltonian circuit problem is np-complete.
\newblock {\em {SIAM} J. Comput.}, 5(4):704--714, 1976.

\bibitem{DBLP:journals/siamjo/HartvigsenL11}
D.~Hartvigsen and Y.~Li.
\newblock Maximum cardinality simple 2-matchings in subcubic graphs.
\newblock {\em {SIAM} J. Optim.}, 21(3):1027--1045, 2011.

\bibitem{DBLP:journals/disopt/Kobayashi10}
Y.~Kobayashi.
\newblock A simple algorithm for finding a maximum triangle-free 2-matching in
  subcubic graphs.
\newblock {\em Discret. Optim.}, 7(4):197--202, 2010.

\bibitem{DBLP:journals/dam/Kratochvil94}
J.~Kratochv{\'{\i}}l.
\newblock A special planar satisfiability problem and a consequence of its
  np-completeness.
\newblock {\em Discret. Appl. Math.}, 52(3):233--252, 1994.

\bibitem{Laroche}
P.~Laroche.
\newblock Planar 1-in-3 satisfiability is {NP}-complete.
\newblock {\em ASMICS Workshop on Tilings, Deuxi\'{e}me Journ$\acute{e}$es
  Polyominos et pavages, Ecole Normale Sup$\acute{e}$rieure de Lyon}, 1992.

\bibitem{Lic82}
D.~Lichtenstein.
\newblock Planar formulae and their uses.
\newblock {\em SIAM J. Comput.}, 11.2:329--343, 1982.

\bibitem{mastersthesis}
D.~Mondal.
\newblock Embedding a planar graph on a given point set.
\newblock Master's thesis, University of Manitoba, 2012.

\bibitem{DBLP:journals/dcg/MooreR01}
C.~Moore and J.~M. Robson.
\newblock Hard tiling problems with simple tiles.
\newblock {\em Discret. Comput. Geom.}, 26(4):573--590, 2001.

\bibitem{Moret88}
B.~M. Moret.
\newblock Planar {NAE} $3$-{SAT} is in {P}.
\newblock {\em SIGACT News 19, 2}, 38:51--54, 1988.

\bibitem{Petersen}
J.~Petersen.
\newblock Die theorie der regul\"aren graphs (the theory of regular graphs).
\newblock {\em Acta Mathematics}, 15(3):193--220, 1891.

\bibitem{DBLP:journals/dmtcs/Pilz19}
A.~Pilz.
\newblock Planar {3-SAT} with a clause/variable cycle.
\newblock {\em Discret. Math. Theor. Comput. Sci.}, 21(3), 2019.

\bibitem{Rahman17}
M.~S. Rahman.
\newblock {\em Basic Graph Theory}.
\newblock Springer International Publishing, 2017.

\bibitem{DBLP:journals/tc/ShihWK90}
W.~Shih, S.~Wu, and Y.~Kuo.
\newblock Unifying maximum cut and minimum cut of a planar graph.
\newblock {\em {IEEE} Trans. Computers}, 39(5):694--697, 1990.

\bibitem{Tippenhauer16}
S.~Tippenhauer.
\newblock {\em Masterarbeit On Planar 3-SAT and its Variants}.
\newblock vorgelegt am Fachbereich Mathematik und Informatik der Freien
  Universität Berlin, 2016.

\bibitem{tsai2011new}
M.-T. Tsai and D.~B. West.
\newblock A new proof of 3-colorability of {Eulerian} triangulations.
\newblock {\em Ars Mathematica Contemporanea}, 4(1):73--77, 2011.

\bibitem{Tutte}
W.~T. Tutte.
\newblock How to draw a graph.
\newblock {\em London Mathematical Society}, 13(52):743--768, 1963.

\bibitem{DBLP:journals/corr/abs-1802-09465}
D.~Wojtczak.
\newblock On strong np-completeness of rational problems.
\newblock {\em CoRR}, abs/1802.09465, 2018.

\end{thebibliography}

\end{document}